\def\N{\mathbb{N}}
\def\R{\mathbb{R}}
\let\e=\varepsilon
\def\tr{\text{\rm tr}}
\def\square{\hbox{$\sqcap\kern-7pt\sqcup$}}
\def\be{\begin{equation}}
\def\ee{\end{equation}}
\def\bea{\begin{eqnarray}}
\def\eea{\end{eqnarray}}
\DeclareMathAlphabet{\pazocal}{OMS}{zplm}{m}{n}
\begin{document}

\title*{From the Hartree to the Vlasov dynamics:\\ conditional strong convergence }
% Use \titlerunning{Short Title} for an abbreviated version of
% your contribution title if the original one is too long

\author{Chiara Saffirio}
% Use \authorrunning{Short Title} for an abbreviated version of
% your contribution title if the original one is too long

\institute{Chiara Saffirio \at 
Departement Mathematik und Informatik, Universit\"at Basel, Spiegelgasse 1, CH-4051 Basel\\
\email{chiara.saffirio@unibas.ch}
%\and Name of Second Author \at Name, Address of Institute \email{name@email.address}
}

%
% Use the package "url.sty" to avoid
% problems with special characters
% used in your e-mail or web address
%
\maketitle

%\abstract*{ }

\abstract{We review the recent results \cite{S19,S-VP} concerning the semiclassical limit from the Hartree dynamics to the Vlasov equation with singular potentials and extend them to the case of more general radial interactions. We prove that,  at positive temperature, the Hartree dynamics converges  in trace norm to the Vlasov one, for a particular class of initial states.}

\keywords{Hartree equation, Vlasov equation, semiclassical limit, mean-field limit, mixed states.}

\section{Introduction}
\label{sec:intro}

\medskip

{\bf The Hartree dynamics.} We consider the nonlinear Hartree-Fock equation 
\be\label{eq:HF}
i\,\hbar\,\partial_t\,\omega_{N,t}=[\,\mathcal{H}_{\rm HF}\,,\,\omega_{N,t}]\,,
\ee
where $\omega_{N,t}$ is a one-particle fermionic operator, i.e. a nonnegative trace class operator over $L^2(\R^3)$ such that $\tr\ \omega_{N,t}=N$, $0\leq\omega_{N,t}\leq 1$ and $N=\hbar^{-3}$.  In \eqref{eq:HF}  $[\mathcal{A},\mathcal{B}]$ denotes the commutator between the operator $\mathcal{A}$ and the operator $\mathcal{B}$,  $\mathcal{H}_{\rm HF}$  is the Hamilton operator given by
\be\label{eq:Ham}
\mathcal{H}_{\rm HF}=-\hbar^2\Delta+({V}*\varrho_t)-\mathcal{X}_t\,,
\ee
where $\Delta$ is the Laplace operator, ${V}:\R^3\to\R$ is a two-body interaction potential, $\varrho_t:\R^3\to\R$  is the density of fermions in the position space at time $t$, i.e. for $x\in\R^3$ $\varrho_t(x)=N^{-1}\omega_{N,t}(x;x)$, and $\mathcal{X}_t$ is the exchange operator defined through its kernel $\mathcal{X}_t(x;y)=N^{-1}{V}(x-y)\omega_{N,t}(x;y)$, where we have used the notation $\mathcal{A}(\cdot;\cdot)$ to denote the kernel of the operator $\mathcal{A}$. We notice that the Hartree-Fock equation \eqref{eq:HF} propagates in time the fermionic structure of the operator $\omega_{N}$. Namely if $\omega_N$ is a fermionic operator its evolution at time $t$ according to the Hartree-Fock equation $\omega_{N,t}$ is also a fermionic operator, i.e. it satisfies the properties $\tr\ \omega_{N,t}=N$ and $0\leq\omega_{N,t}\leq 1$.\\
It has been shown in \cite{EESY,BPS13,BJPSS,PRSS,S18,BGGM1,BGGM2,FK,P,PP,BBPPT} that the Hartree-Fock equation is the mean-field approximation of the $N$-body fermionic Schr\"odinger dynamics for a class of quasi-free states exhibiting a semiclassical structure (see \cite{BPS-rev} for further readings). In particular, as first observed in \cite{EESY}, the mean-field scaling for a system of interacting fermions is naturally coupled with a semiclassical scaling, being the Planck constant $\hbar$ proportional to $N^{-\frac{1}{3}}$, where $N$ is the number of fermions. \\
Since the energy contribution associated with the exchange term can be shown to be subleading (see \cite{BPS13,S-VP}), we will from now on drop the exchange term and consider the fermionic Hartree equation
\be\label{eq:H}
i\,\hbar\,\partial_t\,\omega_{N,t}=[\mathcal{H}_{\rm H}\,,\,\omega_{N,t}]\,,
\ee
where $\mathcal{H}_{\rm H}$ is the Hamilton operator $\mathcal{H}_{\rm HF}$ defined in \eqref{eq:Ham} with exchange term $\mathcal{X}_t=0$. 
\medskip

{\bf The Vlasov equation.} The Hartree equation \eqref{eq:H} is $N$ dependent because of the relation $\hbar=N^{-\frac{1}{3}}$ and the choice $\tr\ \omega_{N,t}=N$. As we perform the limit $N\to\infty$, we are simultaneously performing the semiclassical limit $\hbar\to 0$. As first established by Narnhofer and Sewell in \cite{NS}, the dynamics of $N$ interacting fermions converges in the mean-field and semiclassical regime to a solution to the Vlasov equation
\be\label{eq:Vl}
\partial_t \widetilde{W}_t+v\cdot\nabla_x\widetilde{W}_t+(\nabla V*\widetilde{\rho}_t)\cdot\nabla_v\widetilde{W}_t=0\,,
\ee
where, for each $t\in\R_+$, $\widetilde{W}_t:\R^3\times\R^3\to\R$ is a probability density on the phase space representing the probability that a particle at time $t$ is in position $x\in\R^3$ with velocity $v\in\R^3$. The function $\widetilde{\rho}_t:\R^3\to\R$ is the spatial density of particles defined for each $t$ as $\widetilde{\rho}_t(x)=\int_{\R^3}\widetilde{W}_t(x,v)\,dv$ and $V:\R^3\to\R$ is a two-body interaction potential. \\
The Cauchy problem associated with Eq. \eqref{eq:Vl} has been largely investigated. In \cite{D79} Dobrushin proved wellposedness in the case of potentials satisfying $V\in\mathcal{C}_c^2(\R^3)$. When the interaction is Coulomb or gravitational, existence of classical solutions under regularity assumptions on the initial datum has been established in \cite{I} and \cite{OU} respectively in one and two dimensions. The three dimensional case has been addressed in \cite{BD} for small initial data and in a more general setting by Pfaffelmoser in \cite{Pf} and by Lions and Perthame in \cite{LP}.   \\
The goal of this paper is to review recent results on the strong convergence of the Hartree equation \eqref{eq:H} towards the Vlasov equation \eqref{eq:Vl} when the interaction potential is singular. 
\medskip

{\bf Weyl and Wigner transforms.} To study the semiclassical limit of Eq. \eqref{eq:H}, we observe that the solution of the Hartree equation \eqref{eq:H} is an operator on $L^2(\R^3)$, whereas the solution to the Vlasov equation \eqref{eq:Vl} is a function on the phase space $\R^3\times\R^3$.
 To compare them, we need to set the problem either on the space of operators on $L^2(\R^3)$ or on the space of functions on $\R^3\times\R^3$. To this end, we introduce the notion of Wigner transform. We define the Wigner transform of the one-particle operator $\omega_{N,t}$ with kernel $\omega_{N,t}(\cdot\,;\,\cdot)$ as
 \be\label{eq:Wigner}
 W_{N,t}(x,v)=\left(\dfrac{\hbar}{2\pi}\right)^3\int_{\R^3} \omega_{N,t}\left(x+\dfrac{\hbar y}{2};x-\dfrac{\hbar y}{2}\right)\,e^{-iv\cdot y}dy\,.
 \ee
 Thus, for each $t>0$, $W_{N,t}:\R^3\times\R^3\to\R$ and it is normalized, indeed $$\int_{\R^3\times\R^3}W_{N,t}(x,v)dxdv=\hbar^3\tr\ \omega_{N,t}=\hbar^3\,N=1\,,$$ where in the last identity we used the relation $\hbar=N^{-1/3}$. Despite $W_{N,t}$ is normalized, in general it is not a probability density as it is not nonnegative.\\
The inverse transformation is called Weyl quantization. Given a function $W_{N,t}$ on the phase space, we define the Weyl quantization of $W_{N,t}$ by 
\be\label{eq:Weyl}
\omega_{N,t}(x;y)=N\int W_{N,t}\left(\dfrac{x+y}{2},v\right)e^{iv\cdot(x-y)/\hbar}dv\,,
\ee
where $\omega_{N,t}(x;y)$ is the kernel associated with the one-particle operator $\omega_{N,t}$. Therefore, the Weyl quantization transforms a function on the phase space into the kernel of a one-particle operator. Moreover, the quantity $\int_{\R^3}W_{N,t}(x,v)dv$ describes the density of fermions in position space at point $x\in\R^3$ and $\int_{\R^3}W_{N,t}(x,v)dx$ represents the density of particles with velocity $v\in\R^3$.

\medskip

{\bf Initial states.} We are interested in solutions to Eq. \eqref{eq:H} which are evolutions of initial states describing equilibrium states of trapped systems. In the mean-field regime such equilibrium states are expected to be approximately quasi-free. In this paper we are interested in the evolution of quasi-free states at positive temperature, usually referred to as mixed states. More precisely, Shale-Stinespring condition (see Theorem 9.5 in \cite{Lieb}) guarantees that every one-particle operator $\omega_N$ such that $\tr\ \omega_N=N$ and $0\leq\omega_N\leq 1$ is the one-particle reduced density of a quasi-free state with $N$ fermions. When looking at the semiclassical limit, the advantage of considering mixed quasi-free states is that the associated Wigner transform is a regular function on the phase space. This is in general not the case when looking at zero temperature states (the so called pure states).  See \cite{BPSS} for further reading on this matter.  
\medskip

{\bf State of art.} The problem of obtaining the Vlasov equation as a classical limit of a dynamics of interacting quantum particles has been largely investigated. The first result in this direction is due to Narnhofer and Sewell. In \cite{NS} the authors consider a system of $N$ fermions interacting through a two-body analytic potential $V\in\mathcal{C}^{\omega}(\R^3)$ and they obtain the Vlasov equation directly from the dynamics given by the $N$-fermion Schr\"odinger equation. In \cite{Spohn81} the assumption on the interaction potential has been relaxed to $V\in\mathcal{C}^2(\R^3)$. The case of $N$ bosons given by WKB states in the mean-field regime combined with a semiclassical limit has been analysed in \cite{GMP}. \\
The semiclassical limit from the Hartree dynamics towards the Vlasov equation has been proven in weak topology for regular and singular interactions (here included the Coulomb potential) in \cite{LionsPaul,MM,FLP}. This analysis has been extended to the Hartree-Fock equation in \cite{GIMS}.\\
The above cited references establish weak convergence towards the solution to the Vlasov equation, but do not provide any control on the rate of convergence. The problem of obtaining explicit bounds on the convergence rate has been first addressed in \cite{APPP} where the convergence from the Hartree equation to the Vlasov dynamics has been established in strong topology for regular interactions and later extended in \cite{PP,AKN1,AKN2,BPSS} and in \cite{S19,S-VP} for inverse power law potentials (included Coulomb interaction).\\
We also mention that a new approach has been initiated in \cite{GMP1, GolsePaul1}, where a notion of pseudo-distance reminiscent of the Monge-Kantorovich distance for classical probability measures has been introduced. Explicit bounds on the rate of convergence in the topology induced by such quantum analogous of the Monge-Kantorovich distance have been obtained for smooth (Cf. \cite{GolsePaul1}) and singular potentials, including the Coulomb interaction (Cf. \cite{Lafleche1,Lafleche2}).    \\
More recently, in the context of regular interactions, the joint mean-field and semiclassical limit for the dynamics of $N$ fermions has been studied also in \cite{CLL19} by means of BBGKY type of hierarchy for the $k$-particle Husimi measure.  
\medskip

{\bf Main result.} In this paper we are concerned with the strong convergence of the Hartree dynamics towards the Vlasov equation with interactions satisfying the following assumptions: 
\begin{itemize}
\item[a) ] let $V:\R^3\to\R$ be a radially symmetric potential that is three times differentiable away from $x=0$ and denote $$V^{(m)}(|x|):=\dfrac{d^{m}}{d|x|^m}V(x)\,;$$ 
\item[b) ] for $0\leq m\leq 3$, assume 
$$\lim_{|x|\to\infty}|x|^mV^{(m)}(|x|)=0\,;$$
\item[c) ]\ for $\delta\in(0,3/2]$, we assume
$$\int_0^k |r^2V^{(3)}(r)-rV^{(2)}(r)|r^{\frac{9}{2}-\delta}dr<\infty\,. $$
\end{itemize}
We can think for instance to inverse power law interaction potentials, but more general interactions satisfying the above assumptions are included in the analysis performed in this paper. \\
For some $a>0$ and $k\in\N$, we introduce the weighed Sobolev spaces  $H_a^k$, defined as the space of all square integrable functions $f$ on the phase space $\R^3\times\R^3$ such that the following norm is finite
$$
\|f\|_{H_a^k}:=\left(\sum_{|\beta|\leq k}\int (1+x^2+v^2)^a\,|\nabla^\beta f(x,v)|^2\,dx\,dv\right)^{\frac{1}{2}}
$$
where $\beta$ is a multi-index and $\nabla^\beta$ can act on both $x$ and $v$ variables.\\
We are now ready to state our result.
\begin{theorem}\label{thm:main}
Let $V$ be a two-body potential satisfying assumptions {\rm a)}, {\rm b)} and {\rm c)} above and $\hbar=N^{-\frac{1}{3}}$. Let $\omega_{N}$ a sequence of fermionic operators, i.e. $\tr\ \omega_N=N$ and $0\leq \omega_N\leq N$, and denote by $\omega_{N,t}$ the solution to the Hartree equation \eqref{eq:H} with initial data $\omega_N$. Let $W_N$ the Wigner transform of the operator $\omega_N$ and assume\footnote{Such a solution indeed exists if the initial data are sufficiently regular. For the precise assumptions, see \cite{S19,S-VP}.} that there exists a unique smooth solution $\widetilde{W}_{N,t}$ of the Vlasov equation \eqref{eq:Vl} with initial data $W_N$ such that $\nabla^2\rho_t\in L^{\infty}(\R^3)$ for all $t\in[0,T]$ and $\|W_N\|_{H_4}^k$ is bounded uniformly in $N$ for $k=1,\dots, 6$.\\ Assume moreover that there exists a time $T>0$ and a constant $C>0$ such that
\be\label{eq:semiclassical-assump}
\sup_{t\in[0,T]}\sum_{i=1}^3\left[ \|\varrho_{|[x_i\,,\,\widetilde{\omega}_{N,t}]|}\|_{1}+\|\varrho_{|[x_i\,,\,\widetilde{\omega}_{N,t}]|}\|_{\infty} \right]\leq C\,N\,\hbar
\ee
where 
\be\label{eq:rho-def}
\rho_{|[x_i,\widetilde{\omega}_{N,t}]|}(x):=|[x_i\,,\,\widetilde{\omega}_{N,t}]|(x;x)\,.
\ee
Then there exist constants $C_k$, $k=0,\dots,4$, depending only on $T$ and on the $H_4^{2+k}$ norm of the Wigner transform of the initial data $W_N$ such that
\be\label{eq:trace-norm}
\tr\ \left| \omega_{N,t}-\widetilde{\omega}_{N,t}\right|\leq C_0\,N\,\hbar\,\left[ 1+C_1\hbar+C_2\hbar^2+C_3\hbar^3+C_4\hbar^4 \right]\,. 
\ee
\end{theorem}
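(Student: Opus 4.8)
The plan is to run a Gronwall argument directly on the trace norm of $\xi_{N,t}:=\omega_{N,t}-\wt\omega_{N,t}$, where $\wt\omega_{N,t}$ is the Weyl quantization \eqref{eq:Weyl} of the Vlasov solution $\wt W_{N,t}$. Since the Weyl and Wigner transforms are mutually inverse, the Weyl quantization of the initial datum $W_N$ is exactly $\omega_N$, so $\xi_{N,0}=0$. I would first write the equation satisfied by $\wt\omega_{N,t}$ as $i\hbar\,\pa_t\wt\omega_{N,t}=[\mathcal{H}_{\rm H}(t),\wt\omega_{N,t}]+\cE_t$, where $\mathcal{H}_{\rm H}(t)=-\hbar^2\Delta+V*\varrho_t$ is the genuine Hartree Hamiltonian and $\cE_t$ collects the mismatch. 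As $\mathcal{H}_{\rm H}(t)$ is self-adjoint its propagator is unitary, conjugation by it preserves the trace norm, and Duhamel's formula gives \[ \tr|\xi_{N,t}|\le\frac1\hbar\int_0^t\tr|\cE_s|\,ds, \] so the whole problem reduces to estimating $\tr|\cE_s|$.

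The second step is to split $\cE_s=\mathcal{R}_s+\mathcal{M}_s$ into a purely semiclassical remainder and a mean-field difference. With $\wt{\mathcal H}(s)=-\hbar^2\Delta+V*\wt\rho_s$ and the Vlasov density $\wt\rho_s$, I set \[ \mathcal{R}_s=[\wt{\mathcal H}(s),\wt\omega_{N,s}]-i\hbar\,\pa_s\wt\omega_{N,s},\qquad \mathcal{M}_s=[V*(\varrho_s-\wt\rho_s),\wt\omega_{N,s}]. \] For $\mathcal{R}_s$ I would pass to symbols (following \cite{S19,S-VP}): the commutator with $-\hbar^2\Delta$ reproduces the transport term $v\cdot\grad_x\wt W_{N,s}$ exactly (its symbol is quadratic in $v$), while the commutator with $V*\wt\rho_s$ produces, through the expansion of the associated Moyal bracket, the force term $(\grad V*\wt\rho_s)\cdot\grad_v\wt W_{N,s}$ at leading order plus corrections of successive orders in $\hbar$, built from higher $x$-derivatives of $V*\wt\rho_s$ paired with higher $v$-derivatives of $\wt W_{N,s}$. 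Because $\wt W_{N,s}$ solves the Vlasov equation \eqref{eq:Vl}, the leading order cancels against $i\hbar\,\pa_s\wt\omega_{N,s}$, so that $\tr|\mathcal{R}_s|$ is bounded by $CN\hbar^2[1+C_1\hbar+\cdots+C_4\hbar^4]$, the coefficients being controlled by the weighted Sobolev norms $\|\wt W_{N,s}\|_{H_4^k}$ propagated along the Vlasov flow.

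The crux of the argument, and the step I expect to be the main obstacle, is that these correction terms involve up to third derivatives of $V*\wt\rho_s$, which are singular, so for a singular interaction the Moyal expansion cannot be performed naively and its terms are not automatically integrable. This is precisely where the radial assumptions {\rm a)}--{\rm c)} enter: after expressing the Cartesian derivatives of the radial potential in terms of $V^{(1)},V^{(2)},V^{(3)}$ and separating radial from tangential contributions, the most dangerous term reduces to the weighted integral $\int_0^k|r^2V^{(3)}(r)-rV^{(2)}(r)|\,r^{9/2-\delta}dr$ of assumption {\rm c)}, whose finiteness together with the decay {\rm b)} keeps $\tr|\mathcal{R}_s|$ of the stated size $N\hbar^2$. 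The analogous singularity in the mean-field term $\mathcal{M}_s$ I would treat through the kernel identity $[V*\sigma,A](x;y)=\bigl(\int_0^1\grad(V*\sigma)(y+\lambda(x-y))\,d\lambda\bigr)\cdot(x-y)\,A(x;y)$, which expresses $\mathcal{M}_s$ in terms of the position commutators $[x_i,\wt\omega_{N,s}]$; estimating $\grad(V*(\varrho_s-\wt\rho_s))$ by the singular-potential assumptions, inserting the diagonal identity $\varrho_s-\wt\rho_s=N^{-1}\xi_{N,s}(x;x)$, and invoking the semiclassical hypothesis \eqref{eq:semiclassical-assump} in the form $\sum_i\tr|[x_i,\wt\omega_{N,s}]|\le CN\hbar$, one arrives at a bound $\tr|\mathcal{M}_s|\le C\hbar\,\tr|\xi_{N,s}|$.

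Collecting the two estimates in Duhamel's inequality yields \[ \tr|\xi_{N,t}|\le C\,N\hbar\,[1+C_1\hbar+\cdots+C_4\hbar^4]\,t+C\int_0^t\tr|\xi_{N,s}|\,ds, \] and Gronwall's lemma on $[0,T]$ then produces \eqref{eq:trace-norm}, with constants $C_k$ depending only on $T$ and on the $H_4^{2+k}$ norms of the initial Wigner transform $W_N$, as claimed.
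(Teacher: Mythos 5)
Your first two steps---conjugating the difference through the Hartree propagator, applying Duhamel, and splitting the error into the semiclassical remainder $\mathcal{R}_s$ (the paper's $\mathcal{B}_s$ of Lemma \ref{lem:est-tr}) and the mean-field term $\mathcal{M}_s=[V*(\varrho_s-\widetilde{\rho}_s),\widetilde{\omega}_{N,s}]$---coincide with the paper's Lemma \ref{lem:est-tr}, and your target bounds for the two pieces are exactly the ones the paper proves (Propositions \ref{prop:2term-tr} and \ref{prop:1term-tr}). The genuine gap is in your treatment of $\mathcal{M}_s$. To close the Gr\"{o}nwall inequality you must bound $\tr\,|\mathcal{M}_s|$ \emph{linearly} in $\tr\,|\xi_{N,s}|$, and the only control on $\varrho_s-\widetilde{\rho}_s$ that trace-norm duality provides is $\|\varrho_s-\widetilde{\rho}_s\|_{L^1}\leq N^{-1}\tr\,|\xi_{N,s}|$. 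Your plan to estimate $\nabla(V*(\varrho_s-\widetilde{\rho}_s))$ in $L^\infty$ therefore fails for the singular potentials covered by assumptions a)--c) (e.g.\ Coulomb): there $\nabla V\notin L^\infty$, and convolution with a singular kernel does not map $L^1$ into $L^\infty$, so $\|\nabla(V*(\varrho_s-\widetilde{\rho}_s))\|_\infty$ is not controlled by $N^{-1}\tr\,|\xi_{N,s}|$; a Young-type bound $\|\nabla V\|_{L^p}\|\varrho_s-\widetilde{\rho}_s\|_{L^{p'}}$ would require an $L^{p'}$ bound, $p'>1$, on the density difference, which is not available linearly in the trace norm. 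There is also a secondary operator-theoretic problem: in your kernel identity the prefactor $\int_0^1\nabla(V*\sigma)(y+\lambda(x-y))\,d\lambda$ depends on \emph{both} variables, so it is not a multiplication operator acting on $[x_i,\widetilde{\omega}_{N,s}]$, and an inequality of the form $\tr\,|F(x,y)K(x;y)|\leq\|F\|_\infty\,\tr\,|K|$ is false in general (one needs $F$ to be a Schur multiplier of the trace class, e.g.\ with a factorized representation $F(x,y)=\int a_\mu(x)b_\mu(y)\,d\mu$).

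This is precisely the difficulty the paper resolves with the generalized Fefferman--de la Llave formula (Proposition \ref{prop:FDLL}): $V$ is decomposed over Gaussians $\chi_{(r,z)}$ at all scales $r$, so that $\mathcal{M}_s=\frac{4}{\pi}\int dz\,(\varrho_s-\widetilde{\rho}_s)(z)\int_0^\infty r^3 g(r)\,[\chi_{(\sqrt{2}r,z)},\widetilde{\omega}_{N,s}]\,dr$, pairing the $L^1$ norm of the density difference with a bound, uniform in $z$, on $\tr\,|[\chi_{(r,z)},\widetilde{\omega}_{N,s}]|$. That bound is Lemma \ref{lemma:trace}, which produces the factor $r^{3/2-3\delta}$ making the $r$-integral against $r^3|g(r)|$ convergent near $0$ by assumption c) and at infinity by assumption b). Note that this step uses the \emph{full} hypothesis \eqref{eq:semiclassical-assump}: not only the trace norms $\tr\,|[x_i,\widetilde{\omega}_{N,s}]|\leq CN\hbar$ that you invoke, but also the $L^\infty$ bound on $\rho_{|[x_i,\widetilde{\omega}_{N,s}]|}$, which controls the Hardy--Littlewood maximal function $\rho^*_{|[x_i,\widetilde{\omega}_{N,s}]|}(z)$ uniformly in $z$. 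Without the scale decomposition and this refined commutator estimate, your Gr\"{o}nwall inequality cannot be closed for singular $V$. (Your sketch for $\mathcal{R}_s$ is, by contrast, morally the paper's argument: the paper implements the ``Moyal remainder'' as a second-order Taylor remainder at the kernel level, gains the factor $\hbar^2$ through the identity \eqref{eq:v-der}, and again invokes the Fefferman--de la Llave formula with assumptions b), c) to tame the singularity, the $H^{k}_4$ norms entering exactly as you indicate.)
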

\begin{remark}
We recall that $\tr\ \omega_{N,t}=N$, hence the bound  \eqref{eq:trace-norm} is non-trivial, showing that the Vlasov equation is a good semiclassical approximation for the fermionic dynamics given by the Hartree equation with singular interaction potential $V$. 
\end{remark}
\begin{remark}
We observe that the exchange term in \eqref{eq:HF} is subleading in the limit of $N$ large, i.e. as $\hbar\to 0$. For this reason the bound \eqref{eq:trace-norm} is expected to remain correct if we consider the Hartree-Fock equation \eqref{eq:HF} instead of the Hartree equation \eqref{eq:H}.
\end{remark}
\begin{remark}
Hypothesis \eqref{eq:semiclassical-assump} is a stringent assumption. At the moment we are not able to prove the bound \eqref{eq:semiclassical-assump} for the solution of the Vlasov equation. Nevertheless, there is a special situation in which the assumption is satisfied, that is for translation invariant states of the Vlasov equation \eqref{eq:Vl} and for regular steady solutions of the Vlasov equation \eqref{eq:Vl} when $V$ is the gravitational potential. See \cite{PRSS, S-VP} for extended explanations and examples.    
\end{remark}
\medskip

{\bf Strategy of the proof.}
The idea of the proof is to obtain a Gr\"{o}nwall type inequality to compare the solution $\omega_{N,t}$ of the Hartree equation \eqref{eq:H} with a solution $\widetilde{\omega}_{N,t}$ to the Weyl trasformed Vlasov equation, i.e. at the operator level and not as functions on the phase space. The operators $\omega_{N,t}$ and $\widetilde{\omega}_{N,t}$ are fermionic operators, i.e. they satisfy $0\leq\omega_{N,t}\leq 1$ and $0\leq\widetilde{\omega}_{N,t}\leq 1$, with the normalisation $\tr\ \omega_{N,t}=\tr\ \widetilde{\omega}_{N,t}=N$.\\ 
In the Gr\"{o}nwall type inequality we look for there are two terms appearing: the first one is the dominant term that will allow us to close the inequality; the second one is an subleding term and it will determine the rate of convergence of $\omega_{N,t}$ towards $\widetilde{\omega}_{N,t}$, as $\hbar=N^{-1/3}\to 0$. In the dominant term, the Laplace operator related to the kinetic part of the Hartree Hamiltonian \eqref{eq:Ham} appears. To deal with it, we make use of a unitary transformation that acts as a change the reference frame, canceling the kinetic factor. Moreover, we deal with the singularity of the potential $V$ by using a generalised version of the Fefferman and de la Llave representation formula. For general radially symmetric and fast decreasing potentials such a formula was provided in \cite{HS} (see Proposition \ref{prop:FDLL}). This is the key tool in the proof of Theorem \ref{thm:main}, as it allows to isolate the singularity of the potential $V$ at zero and close the Gr\"{o}nwall inequality. To this end, we need to control the trace norm of the commutator between the Weyl transformed solution $\widetilde{\omega}_{N,t}$ of the Vlasov equation and the multiplication operator by a Gaussian $\chi_{(r,z)}(x)=\exp(-|x-z|^2/r^2)$ (see Lemma \ref{lemma:trace}). This is the point at which the quantities \eqref{eq:semiclassical-assump}-\eqref{eq:rho-def} appear and the main reason why we have to restrict to initial data satisfying the bound \eqref{eq:semiclassical-assump} at time $t>0$. To bound the error term, and therefore determine the convergence rate, we use again the generalized Fefferman - de la Llave representation formula for rapidly decreasing, radially symmetric interactions.
\medskip

The paper is organised as follows: in Sect.~\ref{sect:preliminary} we present some auxiliary results, such as the precise statement of the generalized Fefferman - de la Llave representation formula and the key estimate on the commutator $\tr\ |[\chi_{(r,z)},\widetilde{\omega}_{N,t}]|$; in Sect.~\ref{sect:proof} we give the main steps to prove Theorem \ref{thm:main}, focusing on the parts of the proof that differ from the Coulomb interaction case treated in \cite{S-VP}.

%%%%%%%%%%%%%%%%%%%%%%%%%%%%%%%%%%%%%%%%%%%%%
%%%%%%%%%%%%%%%%%%%%%%%%%%%%%%%%%%%%%%%%%%%%%

\section{Preliminary estimates}
\label{sect:preliminary}

\begin{lemma}\label{lem:est-tr}
Let $\omega_{N,t}$ be a solution to the Hartree equation \eqref{eq:H} with initial datum $\omega_N$.
Denote by $W_N$ the Wigner transform of $\omega_N$ and let $\widetilde{W}_{N,t}$ be the solution of the Vlasov equation \eqref{eq:Vl} with initial data $W_N$.
Denote by
$\widetilde{\omega}_{N,t}$ the Weyl transform of $\widetilde{W}_{N,t}$. Then
\be\label{eq:est-tr}
\tr\ |\omega_{N,t}-\widetilde{\omega}_{N,t}|\leq\frac{1}{\hbar}\int_0^t \tr\ \left|\left[ V*(\rho_s-\widetilde{\rho}_s),\tilde{\omega}_{N,s} \right]\right|\,ds +\frac{1}{\hbar}\int_0^t \tr\ |B_s|\,ds
\ee
where, for every $s\in [0,t]$, $B_s$ is the operator with kernel
\be\label{eq:def-B}
B_s(x;y)=\left[\left(V*\widetilde{\rho}_s\right)(x)-\left(V*\widetilde{\rho}_s\right)(y)-\nabla\left(V*\widetilde{\rho}_s\right)\left(\frac{x+y}{2}\right)\cdot(x-y)\right]\,\widetilde{\omega}_{N,t}(x;y)\,.
\ee
\end{lemma}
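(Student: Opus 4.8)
The plan is first to identify the operator equation satisfied by the Weyl transform $\widetilde{\omega}_{N,t}$ of the Vlasov solution, and then to compare it with the Hartree equation \eqref{eq:H} by means of a Duhamel representation in which the full time-dependent Hartree Hamiltonian is removed by a unitary conjugation. Once $\widetilde{\omega}_{N,t}$ is written as the solution of a Hartree-type operator equation with an explicit source, the difference $D_t:=\omega_{N,t}-\widetilde{\omega}_{N,t}$ solves a linear equation driven by exactly the two operators appearing on the right-hand side of \eqref{eq:est-tr}, and the bound will follow from unitary invariance of the trace norm.

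The technical heart is the Weyl quantization of the Vlasov equation \eqref{eq:Vl}. Multiplying \eqref{eq:Vl} by $i\hbar$ and applying \eqref{eq:Weyl} term by term, the transport term $i\hbar\,v\cdot\nabla_x\widetilde{W}_{N,t}$ reproduces \emph{exactly} the kinetic commutator of \eqref{eq:H} (the operator $[-\hbar^2\Delta,\,\cdot\,]$ up to its normalisation), with \emph{no} $\hbar$-correction: since the kinetic symbol is quadratic in $v$ and independent of $x$, the associated expansion truncates at the Poisson-bracket order. This is precisely why the error operator below involves only the potential. For the force term, integrating by parts in $v$ in \eqref{eq:Weyl} and using $\nabla_v e^{iv\cdot(x-y)/\hbar}=\tfrac{i}{\hbar}(x-y)\,e^{iv\cdot(x-y)/\hbar}$ shows that the Weyl quantization of $i\hbar\,(\nabla V*\widetilde{\rho}_t)\cdot\nabla_v\widetilde{W}_{N,t}$ is the operator with kernel $\nabla(V*\widetilde{\rho}_t)(\tfrac{x+y}{2})\cdot(x-y)\,\widetilde{\omega}_{N,t}(x;y)$. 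Comparing this with the kernel $[(V*\widetilde{\rho}_t)(x)-(V*\widetilde{\rho}_t)(y)]\,\widetilde{\omega}_{N,t}(x;y)$ of the commutator $[V*\widetilde{\rho}_t,\widetilde{\omega}_{N,t}]$, the difference is exactly the operator $B_t$ of \eqref{eq:def-B}, which measures the gap between the finite difference of $V*\widetilde{\rho}_t$ and its first-order Taylor expansion at the midpoint. I would thus arrive at
\be\label{eq:weyl-vlasov}
i\hbar\,\partial_t\widetilde{\omega}_{N,t}=[-\hbar^2\Delta,\widetilde{\omega}_{N,t}]+[V*\widetilde{\rho}_t,\widetilde{\omega}_{N,t}]-B_t\,.
\ee

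Next I would subtract \eqref{eq:H}, written as $i\hbar\,\partial_t\omega_{N,t}=[-\hbar^2\Delta+V*\rho_t,\omega_{N,t}]$, from \eqref{eq:weyl-vlasov}. Adding and subtracting $[V*\rho_t,\widetilde{\omega}_{N,t}]$ regroups the two mean-field commutators as
$$[V*\rho_t,\omega_{N,t}]-[V*\widetilde{\rho}_t,\widetilde{\omega}_{N,t}]=[V*\rho_t,D_t]+[V*(\rho_t-\widetilde{\rho}_t),\widetilde{\omega}_{N,t}]\,,$$
so that $D_t$ solves
\be\label{eq:diff-eq}
i\hbar\,\partial_t D_t=[\mathcal{H}_{\rm H}(t),D_t]+[V*(\rho_t-\widetilde{\rho}_t),\widetilde{\omega}_{N,t}]+B_t\,,
\ee
where $\mathcal{H}_{\rm H}(t)=-\hbar^2\Delta+V*\rho_t$ is the self-adjoint Hartree Hamiltonian. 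Crucially $D_0=0$: the Wigner transform of $\omega_N$ is $W_N$ and $\widetilde{\omega}_N$ is the Weyl transform of $\widetilde{W}_{N,0}=W_N$, so $\widetilde{\omega}_N=\omega_N$ because the Weyl and Wigner transforms are mutually inverse.

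Finally, let $U(t,s)$ be the unitary two-parameter propagator generated by the time-dependent self-adjoint operator $\mathcal{H}_{\rm H}(t)$, i.e. $i\hbar\,\partial_t U(t,s)=\mathcal{H}_{\rm H}(t)\,U(t,s)$ with $U(s,s)=\mathrm{Id}$. Since conjugation by $U$ cancels the commutator term in \eqref{eq:diff-eq}, variation of constants together with $D_0=0$ gives
$$D_t=\frac{1}{i\hbar}\int_0^t U(t,s)\Big([V*(\rho_s-\widetilde{\rho}_s),\widetilde{\omega}_{N,s}]+B_s\Big)U(t,s)^*\,ds\,.$$
Taking the trace norm, using $\tr\,|U(t,s)A\,U(t,s)^*|=\tr\,|A|$ for unitary $U(t,s)$, and applying the triangle inequality to split the two source operators, yields exactly \eqref{eq:est-tr}. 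The one point requiring care beyond the bookkeeping above is the existence of the unitary propagator $U(t,s)$ and the justification of the Duhamel formula at the level of trace-class operators; both follow from standard time-dependent perturbation theory given the regularity of the Vlasov solution assumed in Theorem \ref{thm:main}. The genuinely substantive work — estimating the two resulting trace norms — is carried out separately in Section~\ref{sect:preliminary}.
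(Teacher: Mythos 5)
Your proposal is correct and follows essentially the same route as the paper: Weyl-quantize the Vlasov equation (your $[V*\widetilde{\rho}_t,\widetilde{\omega}_{N,t}]-B_t$ is exactly the paper's source operator $\mathcal{A}_t$), cancel the full Hartree Hamiltonian by conjugation with its unitary propagator, and conclude by Duhamel, unitary invariance of the trace norm, and the triangle inequality. The only difference is presentational — you write the evolution equation for the difference $D_t$ and then apply variation of constants, while the paper differentiates the conjugated difference $\mathcal{U}^*(t;s)(\omega_{N,s}-\widetilde{\omega}_{N,s})\mathcal{U}(t;s)$ directly — which amounts to the same computation.
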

%%%%%%%%%%%%%%%%%%%%%
\begin{proof}
Applying the Weyl quantization to the Vlasov equation \eqref{eq:Vl}, we obtain the following equation for the operator $\widetilde{\omega}_{N,t}$: 
\be\label{eq:vlasov-weyl}
i\,\hbar\,\partial_t\,\widetilde{\omega}_{N,t}=[-\hbar^2\,\Delta,\widetilde{\omega}_{N,t}]+\mathcal{A}_t
\ee
where $\mathcal{A}_t$ is the operator with integral kernel
\begin{equation*}
\mathcal{A}_t(x;y)=\nabla\left(V*\widetilde{\rho}_t\right)\left(\frac{x+y}{2}\right)\cdot(x-y)\,\widetilde{\omega}_{N,t}(x;y)\,.
\end{equation*}
We introduce $\mathcal{U}(t;s)$, the two-parameter group of unitary transformations generated by the Hartree Hamiltonian $\mathcal{H}_{\rm H}(t):=-\hbar^2\Delta+V*\rho_t$
\begin{equation*}
\left\{
\begin{array}{l}
i\,\hbar\,\partial_s\,\mathcal{U}(t;s)=\mathcal{H}_{\rm H}(t)\,\mathcal{U}(t;s)\\
\mathcal{U}(s;s)=1 
\end{array}
\right.
\end{equation*}
so that $\omega_{N,t}=\mathcal{U}(t;0)\,\omega_{N}\,\mathcal{U}^*(t;0)$. The role of $\mathcal{U}(t;s)$ is to cancel the kinetic part of the Hamiltonian $\mathcal{H}_{\rm H}(t)$ by conjugating the difference between $\omega_{N,t}$ and $\widetilde{\omega}_{N,t}$ with $\mathcal{U}(t;s)$ and performing the time derivative. This leads to
\begin{equation*}
\begin{split}
i\,\hbar\,\partial_s\,(\mathcal{U}^*(t;s)\,&(\omega_{N,s}-\widetilde{\omega}_{N,s})\,\mathcal{U}(t;s))\\&=\mathcal{U}^*(t;s)\,[\mathcal{H}_{\rm H}(s),\omega_{N,s}-\widetilde{\omega}_{N,s}]\,\mathcal{U}(t;s)\\
&+\mathcal{U}^*(t;s)\,([\mathcal{H}_{\rm H}(s),\omega_{N,s}]-[-\hbar^2\Delta,\widetilde{\omega}_{N,s}]-\mathcal{A}_s)\,\mathcal{U}(t;s)\\
&=\mathcal{U}^*(t;s)\,\left(\left[V*\rho_s,\widetilde{\omega}_{N,t} \right]-\mathcal{A}_s\right)\,\mathcal{U}(t;s)\\
&=\mathcal{U}^*(t;s)\,\left(\left[V*(\rho_s-\widetilde{\rho}_s),\widetilde{\omega}_{N,s}\right]+\mathcal{B}_s\right)\,\mathcal{U}(t;s)\,
\end{split}
\end{equation*}
where $\mathcal{B}_s$ is the operator with integral kernel \eqref{eq:def-B}. Recalling that $\widetilde{\omega}_{N,0}=\omega_N$, Duhamel's formula yields
 \begin{equation}\label{eq:omega-omegatilde}
 \begin{split}
\mathcal{U}^*(t;s)\,(\omega_{N,s}-\widetilde{\omega}_{N,s})\,\mathcal{U}(t;s)&=\frac{1}{i\,\hbar}\int_0^t \mathcal{U}^*(t;s)\,\left[V*(\rho_s-\widetilde{\rho}_s),\widetilde{\omega}_{N,s}\right]\,\mathcal{U}(t;s)\,ds\\
&+\frac{1}{i\,\hbar}\int_0^t \mathcal{U}^*(t;s)\,\mathcal{B}_s\,\mathcal{U}(t;s)\,ds
\end{split}
 \end{equation}
 The bound \eqref{eq:est-tr} follows by taking the trace norm in the above expression and using that $\mathcal{U}(t;s)$ is a family of unitary operators.  
\end{proof}

\begin{proposition}\label{prop:1term-tr}
Under the same assumptions of Lemma \ref{lem:est-tr} and Theorem \ref{thm:main}, there exists a constant $C>0$ such that
\be\label{eq:1term-tr}
%\begin{split}
\tr\ \left|\left[V*(\rho_s-\widetilde{\rho}_s)\,,\,\widetilde{\omega}_{N,s}\right]\right|\leq C\,\hbar\,\tr\ |\omega_{N,s}-\widetilde{\omega}_{N,s}| 
%\\
%&+C_2\,N\,\e+ C_3\,N\,\e^{\frac{7}{5}+\frac{3}{5}\delta}+C_4\,N\,\e^{\frac{17}{5}+\frac{3}{5}\delta}\\
%&+C_5\,N\,\e^{\frac{23}{10}}+C_6\,N\,\e^{\frac{14}{5}}+C_7\,N\,\e^{\frac{43}{10}}+C_8\,N\,\e^{\frac{24}{5}}
%\end{split}
\ee
\end{proposition}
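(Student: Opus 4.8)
The plan is to expand the multiplication operator by $V*(\rho_s-\widetilde{\rho}_s)$ into a superposition of Gaussians via the generalized Fefferman--de la Llave formula, and to feed each resulting commutator with a Gaussian into the key estimate of Lemma~\ref{lemma:trace}. First, though, I would reduce the spatial density difference to the trace norm appearing on the right-hand side. Writing $f_s:=\rho_s-\widetilde{\rho}_s$, the definition of the spatial densities gives $N f_s(x)=(\omega_{N,s}-\widetilde{\omega}_{N,s})(x;x)$; moreover, for any trace class operator $A$ with kernel $A(\cdot\,;\cdot)$, the singular value decomposition together with the Cauchy--Schwarz inequality yield $\int_{\R^3}|A(x;x)|\,dx\le\tr\,|A|$. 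Applying this to $A=\omega_{N,s}-\widetilde{\omega}_{N,s}$ gives
\[
N\,\|f_s\|_{1}\le \tr\,|\omega_{N,s}-\widetilde{\omega}_{N,s}|\,,
\]
so it is enough to establish $\tr\,|[V*f_s,\widetilde{\omega}_{N,s}]|\le C\,\hbar\,N\,\|f_s\|_{1}$.

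Next I would apply Proposition~\ref{prop:FDLL} to the radial potential $V$, writing $V(x-y)=\int_0^\infty\!\!\int_{\R^3}g(r)\,\chi_{(r,z)}(x)\,\chi_{(r,z)}(y)\,dz\,dr$ with $\chi_{(r,z)}(x)=\exp(-|x-z|^2/r^2)$ and a radial weight $g$ determined by $V$. Convolving with $f_s$ and setting $c_{(r,z)}:=\int_{\R^3}\chi_{(r,z)}(y)\,f_s(y)\,dy$, the operator of multiplication by $V*f_s$ equals $\int_0^\infty\!\!\int_{\R^3}g(r)\,c_{(r,z)}\,\chi_{(r,z)}\,dz\,dr$. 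By linearity of the commutator and the triangle inequality for the trace norm this gives
\[
\tr\,\big|[V*f_s,\widetilde{\omega}_{N,s}]\big|\le\int_0^\infty\!\!\int_{\R^3}|g(r)|\,|c_{(r,z)}|\,\tr\,\big|[\chi_{(r,z)},\widetilde{\omega}_{N,s}]\big|\,dz\,dr\,.
\]

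Now I invoke Lemma~\ref{lemma:trace}: combined with the semiclassical hypothesis \eqref{eq:semiclassical-assump} it yields a bound $\tr\,|[\chi_{(r,z)},\widetilde{\omega}_{N,s}]|\le C\,N\,\hbar\,\phi(r)$, uniform in $z$, where $\phi$ records the scaling $|\nabla\chi_{(r,z)}|\sim r^{-1}$ of the Gaussian; this is exactly where the quantities \eqref{eq:semiclassical-assump}--\eqref{eq:rho-def} enter and supply the decisive factor $N\hbar$. Integrating first in $z$ by Fubini, using $\int_{\R^3}|c_{(r,z)}|\,dz\le\|f_s\|_{1}\int_{\R^3}e^{-|w|^2/r^2}\,dw=\|f_s\|_{1}\,(\pi r^2)^{3/2}$, and then in $r$, I am left with
\[
\tr\,\big|[V*f_s,\widetilde{\omega}_{N,s}]\big|\le C\,N\,\hbar\,\|f_s\|_{1}\int_0^\infty|g(r)|\,\phi(r)\,r^{3}\,dr\,,
\]
which, together with the first step, proves the proposition as soon as the radial integral is finite.

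The main obstacle is precisely the finiteness of this last integral. The decay assumptions a) and b) on $V$ and its derivatives control the weight $g$ for large $r$, whereas the behaviour near $r=0$ --- where the fine-scale Gaussians resolve the singularity of $V$ at the origin --- is governed by assumption c): the combination $r^2V^{(3)}(r)-rV^{(2)}(r)$ with the weight $r^{9/2-\delta}$ is exactly what bounds $g(r)\,\phi(r)\,r^{3}$ near zero. Checking this integrability, and tracking the dependence of $C$ on the norms of $V$, is the technical core of the statement and the step at which the argument departs from the Coulomb case of \cite{S-VP}.
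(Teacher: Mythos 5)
Your proposal follows the same architecture as the paper's proof: reduce $\|\rho_s-\widetilde{\rho}_s\|_{L^1}$ to $\frac{1}{N}\tr\,|\omega_{N,s}-\widetilde{\omega}_{N,s}|$ (your diagonal-kernel/SVD argument is a perfectly good substitute for the paper's $L^1$--$L^\infty$ duality step), decompose $V$ into Gaussians via the generalized Fefferman--de la Llave formula, push the trace norm through the superposition, and control $\tr\,|[\chi_{(r,z)},\widetilde{\omega}_{N,s}]|$ by Lemma \ref{lemma:trace} together with hypothesis \eqref{eq:semiclassical-assump}; the bookkeeping $N\hbar\cdot N^{-1}=\hbar$ is also correct. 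The gap is in the radial integral. You use a \emph{single} bound $\tr\,|[\chi_{(r,z)},\widetilde{\omega}_{N,s}]|\le C N\hbar\,\phi(r)$ for all $r\in(0,\infty)$. Lemma \ref{lemma:trace} gives $\phi(r)=r^{3/2-3\delta}$, which \emph{grows} with $r$; since for $n=3$ one has $r^3 g(r)=\frac{2}{\pi}\bigl(rV^{(2)}(r)-r^2V^{(3)}(r)\bigr)$, and assumptions a)--b) only give $rV^{(2)}(r)-r^2V^{(3)}(r)\to 0$ at infinity, the tail $\int_k^\infty |g(r)|\,r^{3}\,r^{3/2-3\delta}\,dr$ is not controlled by the hypotheses: the integrand can even grow. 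This is not a deferrable technicality; with the lemma's $\phi$ the argument genuinely fails at large $r$. Conversely, if your $\phi$ is meant to be $\phi(r)\sim r^{-1}$ (as your remark that $\phi$ "records the scaling $|\nabla\chi_{(r,z)}|\sim r^{-1}$" suggests), then the small-$r$ part $\int_0^k |g(r)|\,r^{2}\,dr$ diverges for singular potentials, and assumption c), whose weight is a positive power of $r$, cannot rescue it. No single power of $r$ works on both ends of the integral.

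The paper's proof resolves exactly this tension by splitting the $r$-integral at a threshold $k$ and using two \emph{different} commutator estimates. For $r\in[0,k)$ it uses Lemma \ref{lemma:trace}, whose factor $r^{3/2-3\delta}$ tames the singularity of $g$ at $r=0$ (this is where assumption c) enters, as you correctly identified). For $r\in[k,\infty)$ it discards any gain in $r$ and uses the cruder, $r$-uniform bound obtained by running the proof of Lemma \ref{lemma:trace} with the interpolation parameter $\gamma=0$, namely $\tr\,|[\chi_{(r,z)},\widetilde{\omega}_{N,s}]|\le C\,\|\rho_{|[x_i,\widetilde{\omega}_{N,s}]|}\|_{L^1}\le C\,N\,\hbar$, for which only $\int_k^\infty r^3|g(r)|\,dr<\infty$ is needed, and this is what the decay assumption b) is invoked for. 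Inserting this two-regime splitting, with the second ($\gamma=0$) estimate for large $r$, closes the gap; the remainder of your argument then goes through essentially as in the paper.
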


The proof of Proposition \ref{prop:1term-tr} relies on the generalization proved in \cite{HS} of the Fefferman - de la Llave representation formula established in \cite{FDLL} for the Coulomb potential. 

\begin{proposition}[Theorem 1 in \cite{HS}]\label{prop:FDLL}
For $n\geq 2$, let $V:\R^n\to\R$ be a radial function that is $[n/2]+2$ times differentiable away from $x=0$, where $[a]$ denotes the integer part of $a$. For $m\in\N_0$ denote $V^{(m)}(|x|)=d^m/d|x|^m\,V(x)$. Assume that $\lim_{|x|\to\infty}|x|^mV^{(m)}(|x|)=0$ for all $0\leq m\leq [n/2]+1$. Then
\be\label{eq:FDLL}
V(x)=\int_0^\infty \int_{\R^n} g(r)\,{\bf 1}_{\{|x-z|\leq r\}}{\bf 1}_{\{|z|\leq r\}}\,dz\,dr
\ee
where 
\begin{equation}\label{eq:g(r)}
\begin{split}
g(r)=\frac{(-1)^{[\frac{n}{2}]}}{\Gamma\left(\frac{n-1}{2}\right)}\frac{2}{(\pi\,r^2)^{\frac{(n-1)}{2}}}&\left( \int_r^\infty ds\,V^{([\frac{n}{2}]+2)}(s)\,\left(\frac{d}{ds}\right)^{n-1-[\frac{n}{2}]}s(s^2-r^2)^{\frac{1}{2}(n-3)}\right.\\
&+\left.{\bf 1}_{\{n=2k+1,\, k\in\N\}}V^{([\frac{n}{2}]+2)}(r)\,r(2r)^{\frac{1}{2}(n-3)}\Gamma\left(\frac{n-1}{2}\right)\right)
\end{split}
\end{equation}
where $\Gamma(\cdot)$ denotes the Gamma function.
\end{proposition}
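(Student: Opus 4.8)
The plan is to use the radial symmetry of $V$ to collapse the identity \eqref{eq:FDLL} to a one--dimensional equation, to recognise that equation as an Abel--type integral equation, and to produce the kernel $g$ of \eqref{eq:g(r)} by inverting it. The parity of $n$ will account for the two--term structure of $g$, while the decay hypotheses $\lim_{|x|\to\infty}|x|^mV^{(m)}(|x|)=0$ for $0\le m\le[n/2]+1$ together with the differentiability up to order $[n/2]+2$ will guarantee the convergence of every integral and the vanishing of every boundary term.

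\textbf{Radial reduction and the overlap volume.} Writing $V(x)=V(\rho)$ with $\rho=|x|$, the inner $z$--integral in \eqref{eq:FDLL} is the volume of the intersection of two balls of radius $r$ whose centres lie at distance $\rho$, namely
\[
\Phi(\rho,r):=\int_{\R^n}\mathbf{1}_{\{|x-z|\le r\}}\,\mathbf{1}_{\{|z|\le r\}}\,dz=\bigl|B_r(x)\cap B_r(0)\bigr|,
\]
which depends only on $\rho$ and $r$. Slicing perpendicular to the segment joining the two centres and exploiting the symmetry of the lens about its mid--hyperplane gives the explicit expression
\[
\Phi(\rho,r)=2\,\omega_{n-1}\int_{\rho/2}^{r}(r^2-t^2)^{\frac{n-1}{2}}\,dt,\qquad 0\le\rho\le 2r,
\]
and $\Phi(\rho,r)=0$ for $\rho>2r$, where $\omega_{n-1}$ is the volume of the unit ball in $\R^{n-1}$ (a quick check at $\rho=0$ recovers the volume of $B_r(0)$). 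Thus the right--hand side of \eqref{eq:FDLL} is the radial quantity $F(\rho):=\int_0^\infty g(r)\,\Phi(\rho,r)\,dr$, and it remains to prove $F=V$.

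\textbf{Fubini and the Abel equation.} For fixed $\rho>0$ the weight $\Phi(\rho,r)$ vanishes for $r<\rho/2$, so there is no contribution from the region $r\to0$ where $g$ and $V$ may be singular, and the decay of $V$ makes the integral absolutely convergent at $r\to\infty$; hence Fubini applies. Exchanging the $t$-- and $r$--integrations yields
\[
F(\rho)=\int_{\rho/2}^{\infty}G(t)\,dt,\qquad G(\tau):=2\,\omega_{n-1}\int_{\tau}^{\infty}g(r)\,(r^2-\tau^2)^{\frac{n-1}{2}}\,dr,
\]
whence $F'(\rho)=-\tfrac12\,G(\rho/2)$ and $F(\rho)\to0$ as $\rho\to\infty$. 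Since $V(\rho)\to0$ as well (the hypothesis with $m=0$), the identity $F=V$ is equivalent to $F'=V'$, i.e., setting $\tau=\rho/2$, to
\[
2\,\omega_{n-1}\int_{\tau}^{\infty}g(r)\,(r^2-\tau^2)^{\frac{n-1}{2}}\,dr=-2\,V'(2\tau).
\]
This is an Abel--type integral equation for $g$, with kernel $(r^2-\tau^2)^{(n-1)/2}$.

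\textbf{Inversion.} The substitution $u=r^2$, $w=\tau^2$, $\phi(u):=g(\sqrt u)/(2\sqrt u)$ turns the last display into $\int_w^\infty(u-w)^{(n-1)/2}\phi(u)\,du=-V'(2\sqrt w)/\omega_{n-1}$, which is $\Gamma(\tfrac{n+1}{2})$ times the right--sided (Weyl) fractional integral of order $\tfrac{n+1}{2}$ of $\phi$; applying the corresponding fractional derivative $D_-^{(n+1)/2}$ and undoing the substitution recovers $g$. Here the parity of $n$ is decisive: for odd $n$ the exponent $\tfrac{n-1}{2}$ is an integer and the inversion is an iterated classical derivative, whereas for even $n$ it is a half--integer and a genuine half--order operator remains, producing the half--integer kernel $(s^2-r^2)^{(n-3)/2}$ in \eqref{eq:g(r)}. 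In either case one integrates by parts until $V$ enters only through its top derivative $V^{([n/2]+2)}$, which is legitimate since $V$ is $[n/2]+2$ times differentiable and every boundary term at $s=\infty$ vanishes by $\lim_{s\to\infty}s^mV^{(m)}(s)=0$ for $m\le[n/2]+1$; this yields the integral term $\int_r^\infty V^{([n/2]+2)}(s)\,(d/ds)^{n-1-[n/2]}s(s^2-r^2)^{(n-3)/2}\,ds$. The lower--endpoint contribution at $s=r$ survives precisely when $n$ is odd, where the integer exponent $(n-3)/2\ge0$ leaves the kernel regular there, and this is exactly the summand carrying $\mathbf{1}_{\{n=2k+1\}}$; the surviving factors $(-1)^{[n/2]}$, $\Gamma(\tfrac{n-1}{2})$, the value $\omega_{n-1}=\pi^{(n-1)/2}/\Gamma(\tfrac{n+1}{2})$ and the powers of $\pi$ then assemble into the normalisation of \eqref{eq:g(r)}.

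The main obstacle will be the fractional inversion in the even--dimensional case and the exact matching of all Gamma--function constants, together with the rigorous justification of differentiation under the integral sign, of the interchange of integrations, and of the vanishing of the boundary contributions at infinity. Each of these rests entirely on the stated hypotheses: it is the differentiability up to order $[n/2]+2$ that supplies enough derivatives for the inversion, and the decay rates $|x|^mV^{(m)}(|x|)\to0$ for $0\le m\le[n/2]+1$ that render all improper integrals absolutely convergent and annihilate the boundary terms, so that the formal inversion is genuinely valid.
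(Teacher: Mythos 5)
Your overall route---collapse \eqref{eq:FDLL} by radial symmetry to the lens-overlap volume $\Phi(\rho,r)$, differentiate once in $\rho$ to obtain a Weyl--Abel integral equation on the half-line, and invert by fractional differentiation, with the parity of $n$ deciding between a purely integer-order inversion (odd $n$, whose lowest-order differentiation of $\int_w^\infty(\cdot)\,du$ produces the local term carried by ${\bf 1}_{\{n=2k+1\}}$) and a genuine half-order inversion (even $n$)---is precisely the method of Hainzl and Seiringer; note that the paper under review contains no proof of this proposition at all, quoting it as Theorem 1 of \cite{HS}. The steps you actually execute are sound: the slicing formula for $\Phi$ (correctly checked at $\rho=0$), the Fubini exchange (legitimate because $\Phi(\rho,r)=0$ for $r<\rho/2$), the reduction $F'(\rho)=-\tfrac12 G(\rho/2)$, and the identification of the kernel $(u-w)^{(n-1)/2}$ with $\Gamma\bigl(\tfrac{n+1}{2}\bigr)I_-^{(n+1)/2}$.

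The genuine gap is that the one step containing the entire content of the theorem---carrying out the inversion and matching it to \eqref{eq:g(r)}---is asserted rather than performed, and the assertion is in fact false for the statement as printed here. Run your own scheme to completion at $n=3$: your Abel equation reads $\pi\int_\tau^\infty g(r)(r^2-\tau^2)\,dr=-V'(2\tau)$, and two differentiations in $\tau$ give
\[
g(r)=\frac{V^{(2)}(2r)-2r\,V^{(3)}(2r)}{\pi r^2}\,,
\]
which for the Coulomb potential yields $g(r)=1/(\pi r^5)$, consistent with the paper's own Coulomb remark; by contrast, \eqref{eq:g(r)} at $n=3$ evaluates to $\frac{2}{\pi r^2}\bigl(V^{(2)}(r)-rV^{(3)}(r)\bigr)$, which for Coulomb gives $16/(\pi r^5)$. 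The resolution is that \eqref{eq:g(r)} is the inverse kernel for indicators of balls of radius $r/2$ (the convention of \cite{HS}), whereas \eqref{eq:FDLL} as transcribed here uses radius $r$; so your claim that ``the surviving factors assemble into the normalisation of \eqref{eq:g(r)}'' would fail if you did the computation, and doing it would have exposed the normalisation mismatch in the quoted statement. Secondarily, your convergence claims are too strong: the hypotheses give decay $|x|^mV^{(m)}(|x|)\to0$ only for $m\le[n/2]+1$ and say nothing quantitative about $V^{([n/2]+2)}$, so the integral defining $g$ and the interchange at $r\to\infty$ are in general only conditionally convergent and must be handled by iterated integration by parts (as in \cite{HS}), not by absolute convergence; and the even-$n$ half-order inversion, where the Gamma-factor bookkeeping is most delicate, is left entirely schematic.
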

\begin{remark}
We observe that the Coulomb potential $V(x)=\dfrac{1}{|x|}$ satisfies the assumptions in Proposition \ref{prop:FDLL}. In that case $g(r)$ has a simple expression and the Fefferman - de la Llave representation formula writes 
\begin{equation*}
\dfrac{1}{|x-y|}=\dfrac{1}{\pi}\int_0^\infty \int_{\R^n} \dfrac{1}{r^5}\,{\bf 1}_{\{|x-z|\leq r\}}\,{\bf 1}_{\{|y-z|\leq r\}}\,dz\,dr
\end{equation*}
\end{remark}
\begin{remark}
We can easily replace the characteristic function $\bm{1}_{\{|x-z|\leq r\}}$ by a smooth function varying on the same scale at the price of having a different constant in front of the expression. In what follows, we choose to use the Gaussian 
\be\label{eq:gaussian}
\chi_{(r\,,\,z)}(x):=\exp\left(- \frac{|x-z|^2}{r^2} \right)\,.
\ee
Hence, with the notations introduced in Proposition \ref{prop:FDLL}, the generalised Fefferman - de la Llave representation formula reads
\be\label{eq:GFDLL}
V(x)=\dfrac{4}{\pi}\int_0^\infty \int_{\R^n} g(r)\,\chi_{(r,z)}(x)\,\chi_{(r,z)}(0)\,dz\,dr
\ee
Moreover, integrating out the $z$ variable, we get
\be\label{eq:V-short}
V(x)=\dfrac{4}{\pi}\int_0^\infty  r^3\,g(r)\,\chi_{(r/\sqrt{2},x)}(0)\,dr
\ee

\end{remark}

For later use, we recall here the definition of the Hardy-Littlewood maximal function:

\begin{definition}\label{def:HLMax}
For  $z\in\R^3$ and $B$ ball in $\R^3$ centred at zero, the Hardy-Littlewood maximal function of a function $f$ is defined as 
\[
f^*(z)=\sup_{B\,:\,z\in B}\dfrac{1}{|B|}\int_B f(x)\,dx
\]
\end{definition}

We can now state the following Lemma, which gives a bound in trace norm on the commutator of $\widetilde{\omega}_{N,t}$ and the multiplication operator $\chi_{(r,z)}$.

\begin{lemma}[Lemma 3.1 in \cite{PRSS}]\label{lemma:trace}
Let $\chi_{(r\,,\,z)}(x)$ be as in \eqref{eq:gaussian} and assume $\left[ x_i ,\widetilde{\omega}_{N,t} \right]$ to be  trace class for all $t \in [0;T]$ and $i=1,2,3$. Then, for all $\delta\in(0,1/2)$, there exists a positive constant  $C$ such that 
the bound
\be\label{eq:trace-com}
\tr\ |[\chi_{(r,z)}\,,\,\widetilde{\omega}_{N,t}]|\leq C\,r^{\frac{3}{2}-3\delta}\sum_{i=1}^3 \|\rho_{|[x_i\,,\,\widetilde{\omega}_{N,t}]|}\|_{L^1}^{\frac{1}{6}+\delta}\left(\rho^*_{|[x_i\,,\,\widetilde{\omega}_{N,t}]|}(z) \right)^{\frac{5}{6}-\delta}
\ee
holds pointwise, where $\rho_{|[x_i\,,\,\widetilde{\omega}_{N,t}]|}$ is defined in \eqref{eq:rho-def} and $\rho^*_{|[x_i\,,\,\widetilde{\omega}_{N,t}]|}$ denotes the Hardy-Littlewood maximal function of $\rho_{|[x_i\,,\,\widetilde{\omega}_{N,t}]|}$ introduced in Definition \ref{def:HLMax}.
\end{lemma}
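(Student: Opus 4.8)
The plan is to reduce the commutator of $\widetilde\omega_{N,t}$ with the Gaussian multiplication operator $\chi_{(r,z)}$ to the commutators with the position operators, $A_i:=[x_i,\widetilde\omega_{N,t}]$, and then to obtain \eqref{eq:trace-com} by interpolating between a \emph{global} trace-norm bound and a \emph{localized} one. The localized bound is the place where the Hardy--Littlewood maximal function enters, and the interpolation weights are chosen precisely so as to reproduce the exponents $\tfrac16+\delta$ and $\tfrac56-\delta$, hence the power $r^{3/2-3\delta}$.

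\noindent\emph{Reduction and global bound.} First I would write $\chi_{(r,z)}(x)=\int\hat\chi_{(r,z)}(p)\,e^{ip\cdot x}\,dp$ and use the Duhamel identity $[e^{ip\cdot x},\widetilde\omega_{N,t}]=\big(\int_0^1 e^{isp\cdot x}\,[ip\cdot x,\widetilde\omega_{N,t}]\,e^{-isp\cdot x}\,ds\big)\,e^{ip\cdot x}$ to get $[\chi_{(r,z)},\widetilde\omega_{N,t}]=\sum_{i=1}^3\Lambda_i$ with
$$\Lambda_i=i\int_0^1 ds\int dp\;p_i\,\hat\chi_{(r,z)}(p)\;e^{isp\cdot x}\,A_i\,e^{i(1-s)p\cdot x},$$
where $x$ denotes the position (multiplication) operator, so that $\tr\,|[\chi_{(r,z)},\widetilde\omega_{N,t}]|\le\sum_i\tr\,|\Lambda_i|$. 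Taking the trace norm inside the integrals and using that $e^{isp\cdot x}$ is unitary (so that $\tr\,|e^{isp\cdot x}A_ie^{i(1-s)p\cdot x}|=\tr\,|A_i|$) together with the Gaussian estimate $\int|p|\,|\hat\chi_{(r,z)}(p)|\,dp\le C\,r^{-1}$ yields the global bound $\tr\,|\Lambda_i|\le C\,r^{-1}\,\tr\,|A_i|=C\,r^{-1}\,\|\rho_{|[x_i,\widetilde\omega_{N,t}]|}\|_{L^1}$.

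\noindent\emph{Localized bound.} Here the goal is $\tr\,|\Lambda_i|\le C\,r^{2}\,\rho^*_{|[x_i,\widetilde\omega_{N,t}]|}(z)$. I would use the polar decomposition $A_i=U_i|A_i|$, a Cauchy--Schwarz inequality in the Hilbert--Schmidt norm (splitting $|A_i|=|A_i|^{1/2}|A_i|^{1/2}$ so that the Gaussian weight acts directly on the density $\rho_{|A_i|}$), and the pointwise bound $|\nabla\chi_{(r,z)}|^2\le C\,r^{-2}\chi_{(r,z)}$. The algebraic point that produces the \emph{first-power} diagonal is that $A_i$ is anti-self-adjoint, so $U_i|A_i|U_i^*=|A_i^*|=|A_i|$ and the Hilbert--Schmidt computation returns $\int|\nabla\chi_{(r,z)}(x)|^2\,|A_i|(x;x)\,dx$; dominating the Gaussian spatial average by the maximal function, $\int\chi_{(r,z)}(x)\,\rho_{|A_i|}(x)\,dx\le C\,r^{3}\,\rho^*_{|A_i|}(z)$, then brings in the value $\rho^*_{|A_i|}(z)$ at the center of the Gaussian, which is exactly where the localization to $z$ is used.

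\noindent\emph{Interpolation and main obstacle.} Since both estimates bound the same nonnegative number $\tr\,|\Lambda_i|$, so does their weighted geometric mean; taking weights $\tfrac16+\delta$ and $\tfrac56-\delta$ gives $\tr\,|\Lambda_i|\le C\,r^{-(\frac16+\delta)+2(\frac56-\delta)}\,\|\rho_{|A_i|}\|_{L^1}^{\frac16+\delta}\big(\rho^*_{|A_i|}(z)\big)^{\frac56-\delta}=C\,r^{3/2-3\delta}\,\|\rho_{|A_i|}\|_{L^1}^{\frac16+\delta}\big(\rho^*_{|A_i|}(z)\big)^{\frac56-\delta}$, and summing over $i$ gives \eqref{eq:trace-com}. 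I expect the main obstacle to be the localized bound: because $\widetilde\omega_{N,t}$ is nonlocal, one must control the spreading of the off-diagonal kernel of $[\chi_{(r,z)},\widetilde\omega_{N,t}]$ so that the localization at $z$ genuinely survives and the \emph{pure} factor $\rho^*_{|A_i|}(z)$ (rather than a weaker quantity mixed with $\tr\,|A_i|$, which is what a naive Cauchy--Schwarz returns) is obtained; achieving this in the strongly localized regime (small $r$, i.e. $\delta$ near $0$) is the technical heart and accounts for the restriction $\delta\in(0,1/2)$.
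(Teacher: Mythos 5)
Your Fourier--Duhamel reduction, the global bound $\tr\,|[\chi_{(r,z)},\widetilde\omega_{N,t}]|\le C r^{-1}\sum_i\tr\,|A_i|$, and the final exponent arithmetic ($-(\tfrac16+\delta)+2(\tfrac56-\delta)=\tfrac32-3\delta$) are all correct, but the argument has a genuine gap at its center: the localized endpoint bound $\tr\,|\Lambda_i|\le C\,r^2\,\rho^*_{|A_i|}(z)$ is never proved, and the sketch you give for it does not produce it. Writing $A_i=U_i|A_i|$ and applying Cauchy--Schwarz in the Hilbert--Schmidt norm to $\nabla_i\chi\,U_i|A_i|^{1/2}\cdot|A_i|^{1/2}$ puts the Gaussian weight on only \emph{one} of the two factors: even using $U_i|A_i|U_i^*=|A_i|$, you get $\tr\,|\nabla_i\chi\,A_i|\le\bigl(\int|\nabla_i\chi|^2\rho_{|A_i|}\bigr)^{1/2}\bigl(\tr\,|A_i|\bigr)^{1/2}\le C\,r^{1/2}\bigl(\rho^*_{|A_i|}(z)\bigr)^{1/2}\|\rho_{|A_i|}\|_{1}^{1/2}$, which is exactly the mixed bound you say you must avoid; the bare factor $|A_i|^{1/2}$ can only ever return $\tr\,|A_i|$. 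To put $\rho^*(z)$ on \emph{both} Cauchy--Schwarz factors you need Gaussian weights flanking $A_i$ on both sides, and your reduction cannot supply them: in $\Lambda_i=i\int_0^1 ds\int dp\,p_i\,\hat\chi_{(r,z)}(p)\,e^{isp\cdot x}A_i\,e^{i(1-s)p\cdot x}$ the flanking operators are phases, and the localization carried by $\hat\chi_{(r,z)}$ (the oscillating factor $e^{-ip\cdot z}$) is destroyed the moment you take absolute values under the $p$-integral.

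The paper obtains two-sided weights differently: it interpolates in the \emph{exponent} of the Gaussian, $\chi_{(r,z)}(x)-\chi_{(r,z)}(y)=\int_0^1 ds\,\frac{d}{ds}\bigl[e^{-s(x-z)^2/r^2}e^{-(1-s)(y-z)^2/r^2}\bigr]$, which produces flanking Gaussians $\chi_{(r/\sqrt{s},z)}$ and $\chi_{(r/\sqrt{1-s},z)}$ whose widths blow up as $s\to 0,1$. With these, the fully localized estimate costs $s^{-3/4}$ and $(1-s)^{-3/4}$ on the two Cauchy--Schwarz factors, and together with the $s^{-1/2}$ from the differentiation the resulting integral $\int_0^1 s^{-5/4}(1-s)^{-3/4}\,ds$ \emph{diverges}: the pure localized endpoint $C r^2\rho^*(z)$ is unreachable by this route. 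This divergence is precisely why the lemma carries $r^{3/2-3\delta}$ with $\delta>0$: the paper interpolates between the global and localized bounds \emph{inside} the $s$-integral, applying the interpolated estimate with $\gamma=2/3-2\delta$ to one factor and the localized one to the other, with $\gamma<2/3$ chosen exactly so that $\int_0^1 s^{-1/2-3\gamma/4}(1-s)^{-3/4}\,ds<\infty$. Interpolating a posteriori between two finished trace-norm bounds, as you propose, cannot work because one of the two endpoints is missing --- and note that if it did exist, your geometric-mean argument would yield the $\delta=0$ case as well, which is stronger than what the lemma and the original proof in \cite{PRSS} provide. The fix is structural, not cosmetic: you must perform the splitting so that both sides of $A_i$ carry Gaussians and delay the interpolation until after Cauchy--Schwarz but before the $s$-integration.
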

\begin{proof}
We consider the integral kernel of the commutator $[\chi_{(r,z)}, \widetilde{\omega}_{N,t}]$ and write it as 
\[ \begin{split}
[ \chi_{(r\,,\,z)} , \widetilde{\omega}_{N,t}] (x;y) &= \left( \chi_{(r\,,\,z)}(x) -\chi_{(r\,,\,z)}(y) \right) \widetilde{\omega}_{N,t}(x;y)  \\
&= \int_0^1 d s\;\dfrac{d}{d s} e^{-\frac{(x-z)^2}{r^2}s}\widetilde{\omega}_{N,t} (x;y)e^{-\frac{(x-z)^2}{r^2}(1-s)}  \\
&=  - \int_0^1 d s\;e^{-s\frac{(x-z)^2}{r^2}} \left[ \frac{(x-z)^2}{r^2}, \widetilde{\omega}_{N,t} \right](x;y) \, e^{-(1-s)\frac{(y-z)^2}{r^2}}\,. 
\end{split} \]
We can therefore write the commutator as
\be\label{eq:comm1}
\begin{split}
[ &\chi_{(r\,,\,z)} , \widetilde{\omega}_{N,t}] 
\\ %&= - \int_0^1 d s\; \chi_{(r/\sqrt{s},z)} (x) \left[ \frac{(x-z)^2}{r^2}, \widetilde{\omega}_{N,t} \right] \chi_{(r/\sqrt{1-s}\,,\,z)} (x) \\  
&= - \sum_{i=1}^3 \int_0^1 d s\; \chi_{(r/\sqrt{s}\,,\,z)} (x)  \left( \frac{(x-z)_i}{r^2}\left[ x_i , \widetilde{\omega}_{N,t} \right] + \left[ x_i , \widetilde{\omega}_{N,t} \right] \frac{(x-z)_i}{r^2} \right) \chi_{(r/\sqrt{1-s}\,,\,z)}(x) \\
&= \sum_{i=1}^3 \mathfrak{A}^{(1)}_i + \mathfrak{A}^{(2)}_i \, .
\end{split}
\end{equation}
where, with an abuse of notation, we denote by $\chi_{(.,.)} (x)$ both the function of $x$ and the corresponding multiplication operator. 

We focus on the first term on the r.h.s. of (\ref{eq:comm1}) and fix $i=1$. The other components of the first term, and the three components of the second term can then be treated similarly. By the spectral decomposition of the commutator $\left[ x_1 ,\widetilde{\omega}_{N,t} \right]$ we have 
\[ [x_1 ,\widetilde{\omega}_{N,t}] = i \sum_j \lambda_j | \varphi_j \rangle \langle \varphi_j| \]
for a sequence of eigenvalues $\lambda_j \in \mathbf{R}$ and an orthonormal system $\varphi_j$ in $L^2 (\mathbf{R}^3)$ (we introduced $i=\sqrt{-1}$ on the r.h.s., because the commutator is anti self-adjoint). We find 
\[ \begin{split} 
\mathfrak{A}_1^{(1)} &= -\int_0^1 d s\; \chi_{(r/\sqrt{s},z)}(x) \frac{(x-z)_1}{r^2} \left[ x_1, \widetilde{\omega}_{N,t} \right]\chi_{(r/\sqrt{1-s},z)}(x)\\
&= -\frac{i}{r} \sum_j \lambda_j \int_0^1 \frac{d s}{\sqrt{s}} \; \bigg\vert \, \chi_{(r/\sqrt{s}\,,\,z)} (x) \frac{(x-z)_1}{r/\sqrt{s}} \varphi_j \bigg\rangle \bigg\langle \chi_{(r/\sqrt{1-s}\,,\,z)} (x) \varphi_j \bigg\vert\,. 
\end{split} \]
Using that $\tr\ \left||\psi_1\rangle\langle\psi_2|\right|=\|\psi_1\|\,\|\psi_2\|$ we obtain
\begin{equation}\label{eq:tr1-1} \begin{split} 
\tr\ | \mathfrak{A}_1^{(1)} | &\leq \frac{1}{r} \sum_j |\lambda_j| \int_0^1 \frac{ds}{\sqrt{s}} \left\| \chi_{(r/\sqrt{s}\,,\,z)} (x) \frac{|x-z|}{r/\sqrt{s}} \varphi_j \right\| \, \left\|  \chi_{(r/\sqrt{1-s}\,,\,z)} (x) \varphi_j \right\|  \\ 
&\leq \frac{1}{r} \int_0^1 \frac{ds}{\sqrt{s}} \left( \sum_j |\lambda_j| \left\| \chi_{(r/\sqrt{s}\,,\,z)} (x) \frac{|x-z|}{r/\sqrt{s}} \varphi_j \right\|^2 \right)^{1/2} \\ &\hspace{4cm}\times \left( \sum_j |\lambda_j| \left\| \chi_{(r/\sqrt{1-s}\,,\,z)} (x) \varphi_j \right\|^2 \right)^{1/2}\,.\end{split} \end{equation}
We compute
\begin{equation}\label{eq:max1} \begin{split} 
\sum_j |\lambda_j| \left\| \chi_{(r/\sqrt{1-s},z)} (x) \varphi_j \right\|^2  &= \int dx \, e^{-2(1-s)(x-z)^2/r^2} \rho_{|[x,\widetilde{\omega}_{N,t}]|} (x)\\ &\leq C \frac{r^3}{(1-s)^{3/2}} \,  \rho^*_{|[x,\widetilde{\omega}_{N,t}]|} (z) \end{split} \end{equation} 
where $\rho^*_{|[x_i,\widetilde{\omega}_{N,t}]|}$ is the Hardy-Littlewood maximal function associated with $\rho_{|[x_i,\widetilde{\omega}_{N,t}]|}$. To prove (\ref{eq:max1}), we write
\[ \begin{split} e^{-2(1-s)(x-z)^2/r^2} &= \int_0^1 \chi (t \leq e^{-2(1-s)(x-z)^2/r^2}) dt \\ &= \int_0^1 \chi \left( |x-z| \leq \sqrt{\frac{r^2 \log (1/t)}{2(1-s)}} \right) dt  \end{split} \]
and, using Fubini, we find 
\[\begin{split} \int dx \, &e^{-2(1-s)(x-z)^2/r^2} \rho_{|[x,\widetilde{\omega}_{N,t}]|} (x) \\ &= \int_0^1 dt \int dx \, \chi \left( |x-z| \leq \sqrt{\frac{r^2 \log (1/t)}{2(1-s)}} \right) \,
\rho_{|[x,\widetilde{\omega}_{N,t}]|} (x) \\ &\leq C \frac{r^3}{(1-s)^{3/2}} \,  \rho^*_{|[x,\widetilde{\omega}_{N,t}]|} (z)  \int_0^1  (\log (1/t))^{3/2} \\ &\leq C \frac{r^3}{(1-s)^{3/2}} \,  \rho^*_{|[x,\widetilde{\omega}_{N,t}]|} (z) \end{split} \]
which shows (\ref{eq:max1}). Similarly to (\ref{eq:max1}), we also find 
\[ \sum_j |\lambda_j| \left\| \chi_{(r/\sqrt{s},z)} (x) \frac{|x-z|}{r/\sqrt{s}} \varphi_j \right\|^2 \leq C \frac{r^3}{s^{3/2}} \rho^*_{|[x_1,\widetilde{\omega}_{N,t}]|} (z)\,. \]
Combining this bound with the simpler estimate
\[ \sum_j |\lambda_j| \left\| \chi_{(r/\sqrt{s},z)} (x) \frac{|x-z|}{r/\sqrt{s}} \varphi_j \right\|^2 \leq C\sum_j |\lambda_j|  = \| \rho_{|[x_1,\widetilde{\omega}_{N,t}]|} \|_1 \]
we obtain 
\[ \sum_j |\lambda_j| \left\| \chi_{(r/\sqrt{s},z)} (x) \frac{|x-z|}{r/\sqrt{s}} \varphi_j \right\|^2 \leq
C \frac{r^{3\gamma} \, \| \rho_{|[x_1, \widetilde{\omega}_{N,t}]|} \|^{1-\gamma}_1 }{s^{3\gamma/2}} \, \left( \rho^*_{|[x_1, \widetilde{\omega}_{N,t}]|} (z) \right)^{\gamma} \]
for any $0 \leq \gamma \leq 1$. Inserting the last bound and (\ref{eq:max1}) on the r.h.s. of (\ref{eq:tr1-1}) we conclude  
\[ \begin{split} \tr\ | \mathfrak{A}_1^{(1)} | &\leq C r^{(1 + 3\gamma)/2} \| \rho_{|[x_1, \widetilde{\omega}_{N,t}]|} \|_1^{(1-\gamma)/2} \left( \rho^*_{|[x_1, \widetilde{\omega}_{N,t}]|} (z) \right)^{(1+\gamma)/2} \\ &\hspace{4cm} \times \int_0^1 ds\frac{1}{s^{1/2+3\gamma/4} (1-s)^{3/4}}\,. \end{split} \]
Hence, for all $\delta > 0$ we find (putting $\gamma = 2/3 -2\delta$)
\[ \tr\ | \mathfrak{A}_1^{(1)} | \leq  C r^{3/2 - 3\delta} \| \rho_{|[x_1, \widetilde{\omega}_{N,t}]|} \|_1^{1/6+\delta} \left( \rho^*_{|[x_1, \widetilde{\omega}_{N,t}]|} (z) \right)^{5/6 -\delta}
\]
which concludes the proof.
\end{proof}

\begin{proof}[Proof of Proposition \ref{prop:1term-tr}]
We write explicitly the convolution appearing on the l.h.s. of \eqref{eq:1term-tr} and get the following expression for the commutator
\[\label{eq:convolution}
[V*(\varrho_s-\widetilde{\varrho}_s)\,,\,\widetilde{\omega}_{N,s}]=\int (\varrho_s(z)-\widetilde{\varrho}_s(z))\,[V(\cdot-z)\,,\,\widetilde{\omega}_{N,s}]\,dz\;.
\]
By \eqref{eq:GFDLL} and \eqref{eq:gaussian}, we can rewrite the potential $V$ as
\[\label{eq:V-short}
V(x-z)=\dfrac{4}{\pi}\int_0^\infty r^3\,g(r)\,\chi_{(\sqrt{2}r,z)}(x)\,dr\,.
\]
Plugging \eqref{eq:V-short} into \eqref{eq:convolution} and taking the trace norm of \eqref{eq:convolution}, we obtain the bound
\[\begin{split}
\tr\ |[V*&(\varrho_s-\widetilde{\varrho}_s)\,,\,\widetilde{\omega}_{N,s}]|\\
&\leq C\int |\varrho_s(z)-\widetilde{\varrho}_s(z)|\int_0^\infty r^3\,|g(r)|\,\tr\ |[\chi_{(\sqrt{2}r,z)}\,,\,\widetilde{\omega}_{N,s}]|\,dr\,dz
\end{split}
\]
We choose $k>0$ and split the integral in the $r$ variable into two parts: the set $r\in[0,k)$ and the set $r\in[k,\infty)$.\\
For $r\in[0,k)$, we use Lemma \ref{lemma:trace} and get
\[
\begin{split}
\int_0^k r^3\,|g(r)|\,&\tr\ |[\chi_{(\sqrt{2}r,z)}\,,\,\widetilde{\omega}_{N,s}]|\,dr\\
&\leq\int_0^k r^{\frac{9}{2}-3\delta}|g(r)|\sum_{i=1}^3\|\rho_{|[x_i,\widetilde{\omega}_{N,s}]|}\|_{L^1}^{\frac{1}{6}+\delta}\left(\rho^*_{|[x_i,\widetilde{\omega}_{N,s}]|}(z) \right)^{\frac{5}{6}-\delta}\,dr\\
&\leq \int_0^k r^{\frac{9}{2}-3\delta}|g(r)|\sum_{i=1}^3\|\rho_{|[x_i,\widetilde{\omega}_{N,s}]|}\|_{L^1}^{\frac{1}{6}+\delta}\|\rho^*_{|[x_i,\widetilde{\omega}_{N,s}]|}\|_{L^\infty}^{\frac{5}{6}-\delta}\,dr\\
&\leq C\,\sum_{i=1}^3\|\rho_{|[x_i,\widetilde{\omega}_{N,s}]|}\|_{L^1}^{\frac{1}{6}+\delta}\|\rho_{|[x_i,\widetilde{\omega}_{N,s}]|}\|^{\frac{5}{6}-\delta}_{L^\infty}
\end{split}
\]
where in the last inequality we used that $\|\rho^*_{|[x_i,\widetilde{\omega}_{N,s}]|}\|_{L^\infty}\leq\|\rho_{|[x_i,\widetilde{\omega}_{N,s}]|}\|_{L^\infty}$ and that the integral in $r$ converges by assumption c) on the potential $V$.\\
As for $r\in[k,\infty)$, following the same lines of the proof of Lemma \ref{lemma:trace} and choosing the parameter $\gamma=0$, we simply bound the trace norm of the commutator $[\chi_{(\sqrt{2}r,z)},\widetilde{\omega}_{N,s}]$ as
\[
\tr\ |[\chi_{(\sqrt{2}r,z)},\widetilde{\omega}_{N,s}]|\leq C\|\rho_{|[x_i,\widetilde{\omega}_{N,s}]|}\|_{L^1}\,.
\]
Then we are left with the integral in $r$. Using assumption b) on the potential $V$, we conclude that
\[
\int_k^\infty r^3\,|g(r)|\,dr=C\int_k^\infty \left|\,r\,V^{(2)}(r)-r^2\,V^{(3)}(r)\,\right|\,dr<+\infty\,.
\]
Hence
\[
\begin{split}
\tr\ |[V*&(\varrho_s-\widetilde{\varrho}_s)\,,\,\widetilde{\omega}_{N,s}]|\\
&\leq C\int |\varrho_s(z)-\widetilde{\varrho}_s(z)|\sum_{i=1}^3\left( \|\rho_{|[x_i,\widetilde{\omega}_{N,s}]|}\|_{L^1}^{\frac{1}{6}+\delta}\|\rho_{|[x_i,\widetilde{\omega}_{N,s}]|}\|_{L^\infty}^{\frac{5}{6}-\delta}+\|\rho_{|[x_i,\widetilde{\omega}_{N,s}]|}\|_{L^1} \right)\,.
\end{split}
\]
Moreover, using the dual definition of $L^1$ norm, we get
\[\begin{split}
\int |\varrho_s(z)-\widetilde{\varrho}_s(z)|\,dz&=\sup_{\substack{\mathcal{O}\in L^\infty\\ \|\mathcal{O}\|_{\infty}\leq 1 }}\left|\int \mathcal{O}(z)\,(\varrho_s(z)-\widetilde{\varrho}_s(z))\,dz\right|\\
&\leq \dfrac{1}{N}\sup_{\substack{\mathcal{O}\in\mathfrak{B}\\ \|\mathcal{O}\|\leq 1 }}\left|\tr\ \mathcal{O}\,(\omega_{N,s}-\widetilde{\omega}_{N,s})\right|\\
&= \dfrac{1}{N}\,\tr\ |\omega_{N,s}-\widetilde{\omega}_{N,s}|
\end{split}
\]
where $\mathfrak{B}$ denotes the set of bounded operators and $\|\cdot\|$ the operator norm. This concludes the proof.
\end{proof}

\begin{proposition}\label{prop:2term-tr}
Let $\mathcal{B}_t$ be the operator associated with the \eqref{eq:def-B}. Then, there exists a constant $C>0$ depending on $\|\widetilde{W}_{N,t}\|_{H^2_4}$, $\|\nabla^2\widetilde{\rho}_t\|$ such that
\be
\tr\ |\mathcal{B}_t|\leq C\,N\,\hbar^2\,\left(1+\sum_{k=1}^4 \hbar^k\|\widetilde{W}_{N,t}\|_{H_4^{k+2}}\right)\,.
\ee
\end{proposition}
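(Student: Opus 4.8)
The plan is to expand the bracket defining the kernel \eqref{eq:def-B} and to trade each factor of $(x-y)$ for one power of $\hbar$ together with a velocity derivative of the Wigner function $\widetilde{W}_{N,t}$. Set $\Phi:=V*\widetilde{\rho}_t$, $m=\tfrac{x+y}{2}$ and $u=x-y$, so that the kernel of $\mathcal{B}_t$ is $\beta(x,y)\,\widetilde{\omega}_{N,t}(x;y)$ with $\beta(x,y)=\Phi(x)-\Phi(y)-\nabla\Phi(m)\cdot u$. This is exactly the error of the midpoint rule, which vanishes to second order in $u$: writing $\Phi(x)=\Phi(m+u/2)$, $\Phi(y)=\Phi(m-u/2)$ and applying the first order Taylor formula with integral remainder to $\nabla\Phi$ around $m$, I would obtain
\[
\beta(x,y)=\sum_{i,j=1}^3 u_i\,u_j\,M_{ij}(x,y)\,,
\]
where each $M_{ij}$ is a fixed weighted average of $\partial_i\partial_j\Phi$ along the segment $[y,x]$, so that $\|M_{ij}\|_\infty\leq C\,\|\nabla^2\Phi\|_\infty$. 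The crucial point of this step is that \emph{only two} derivatives ever fall on $\Phi$: since $\nabla^2\Phi=V*\nabla^2\widetilde{\rho}_t$, the singularity of $V$ is harmless because it is integrated against the $L^1\cap L^\infty$ datum $\nabla^2\widetilde{\rho}_t$, whence $\|\nabla^2\Phi\|_\infty\leq C\,\|\nabla^2\widetilde{\rho}_t\|_{L^1\cap L^\infty}$. This is also what forces the leading order to be $\hbar^2$ rather than $\hbar^3$: exploiting the further cancellation that would make $\beta$ genuinely cubic in $u$ requires $\nabla^3\Phi$, which is not available for singular $V$.

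Next I would quantify the gain in $\hbar$ produced by the factors of $u$. Denoting by $\mathrm{Op}_N(g)$ the Weyl quantization \eqref{eq:Weyl} of a phase-space function $g$ and integrating by parts in $v$, one has for every multi-index $\alpha$ the identity
\[
u^\alpha\,\widetilde{\omega}_{N,t}(x;y)=(i\hbar)^{|\alpha|}\,\big[\mathrm{Op}_N(\partial_v^\alpha\widetilde{W}_{N,t})\big](x;y)\,,
\]
equivalently $u_i\,\widetilde{\omega}_{N,t}(x;y)=[x_i,\widetilde{\omega}_{N,t}](x;y)$, so that multiplying the kernel by $u_i$ amounts to applying $i\hbar\,\partial_{v_i}$ to the symbol. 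Hence the two factors $u_i u_j$ in $\beta$ contribute $\hbar^2$ and two velocity derivatives of $\widetilde{W}_{N,t}$. To generate the whole sum in the statement I would then Taylor expand the slowly varying factor $M_{ij}(x,y)$ in powers of $u$: each additional power of $u$ yields one more factor of $\hbar$ and one more $\partial_v$ on $\widetilde{W}_{N,t}$, the expansion being truncated after four further orders, consistently with the regularity $\|\widetilde{W}_{N,t}\|_{H_4^{6}}$ assumed in Theorem~\ref{thm:main}, the remainder being of the same form.

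It then remains to estimate the trace norm of each resulting operator. The obstruction is that $M_{ij}$, and the coefficients produced by the expansion, are second derivatives of $\Phi$ evaluated at an interior point $(1-\tau)x+\tau y$ of the segment, so they act neither as a left nor as a right multiplication. I would resolve this by writing such a coefficient through its Fourier transform and conjugating $\widetilde{\omega}_{N,t}$ by the plane-wave unitaries $e^{\pm i k\cdot x}$, which leaves the trace norm invariant; this reduces every term to a multiple of $\hbar^{|\alpha|}\,\tr\,|\mathrm{Op}_N(\partial_v^\alpha\widetilde{W}_{N,t})|$ times $\|\partial^2\Phi\|_{\mathcal{F}L^1}$, which is controlled by $\|\nabla^2\widetilde{\rho}_t\|$. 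Finally I would invoke a trace-norm estimate for Weyl quantizations of the form $\tr\,|\mathrm{Op}_N(h)|\leq C\,N\,\|h\|$, in which the phase-space weight $(1+x^2+v^2)^4$ provides the subscript $4$ and the velocity derivatives provide the Sobolev exponent, so that the $\hbar^{|\alpha|}$-term is bounded by $C\,N\,\hbar^{|\alpha|}\,\|\widetilde{W}_{N,t}\|_{H_4^{|\alpha|}}$. Summing over $2\leq|\alpha|\leq 6$ gives the bound of Proposition~\ref{prop:2term-tr}, the leading constant carrying the factor $\|\nabla^2\widetilde{\rho}_t\|\,\|\widetilde{W}_{N,t}\|_{H_4^{2}}$.

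The main obstacle is this last trace-norm estimate: producing the correct power of $\hbar$ (equivalently the correct factor $N=\hbar^{-3}$) together with the right phase-space weights, and combining it with the conjugation device so that the nonlocal, singular potential $\Phi=V*\widetilde{\rho}_t$ enters only through $\nabla^2\widetilde{\rho}_t$ and never through higher derivatives of $V$. Once the reduction to $\tr\,|\mathrm{Op}_N(\partial_v^\alpha\widetilde{W}_{N,t})|$ is in place, the rest is routine bookkeeping of weights and derivatives.
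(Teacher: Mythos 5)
Your first two steps do coincide with the paper's: the second-order (midpoint-rule) Taylor expansion of $U_t=V*\widetilde{\rho}_t$, and the identity \eqref{eq:v-der} trading each factor of $(x-y)$ for $i\hbar\nabla_v$ acting on $\widetilde{W}_{N,t}$. The proposal goes wrong where you generate the sum $\sum_{k=1}^4\hbar^k\|\widetilde{W}_{N,t}\|_{H_4^{k+2}}$: you propose to Taylor-expand the coefficient $M_{ij}$ (an average of $\nabla^2U_t$) in powers of $u$, so that each additional order costs one more derivative of $U_t$, up to $\nabla^6U_t$. This contradicts your own, correct, observation that for singular $V$ only $\nabla^2U_t$ is available, and it is not how the higher-order terms actually arise. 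In the paper they come from the trace-to-Hilbert--Schmidt reduction \eqref{eq:tr-B}: one writes $\tr\,|\mathcal{B}_t|\leq\|(1-\hbar^2\Delta)^{-1}(1+x^2)^{-1}\|_{\rm HS}\,\|(1+x^2)(1-\hbar^2\Delta)\mathcal{B}_t\|_{\rm HS}$, the first factor being $O(\sqrt{N})$; the operator $(1-\hbar^2\Delta)$ and the weight $(1+x^2)$ --- which in the variables \eqref{eq:change-var} becomes $1+q^2+\hbar^2p^2$, powers of $p$ turning into $\nabla_v$-derivatives by Plancherel --- deposit the extra factors $\hbar\nabla_q$, $\hbar^2\Delta_q$, $\hbar^2\nabla_v^2$ on the Wigner function (this is the origin of the seven terms $\widetilde{\mathcal{B}}_1,\dots,\widetilde{\mathcal{B}}_7$), while the derivatives falling on $U_t$ beyond the second carry compensating powers of $\hbar$ and are then moved onto the Gaussian of the representation formula, never onto $V$ or $\widetilde{\rho}_t$. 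The ``trace-norm estimate for Weyl quantizations'' that you invoke as a black box is precisely this mechanism; it cannot be separated from the argument, because it \emph{is} the argument.

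Second, your treatment of the singular potential does not close. The Young-inequality bound $\|\nabla^2U_t\|_\infty\leq C\|\nabla^2\widetilde{\rho}_t\|_{L^1\cap L^\infty}$ presupposes $V\in L^1_{\rm loc}$ plus boundedness away from the origin, which holds for Coulomb but is not what assumptions a)--c) grant in general. More importantly, your plane-wave conjugation never uses $\|\nabla^2U_t\|_\infty$ at all: it requires the Wiener norm $\|\widehat{\nabla^2U_t}\|_{L^1}$, and this is not controlled by $\|\nabla^2\widetilde{\rho}_t\|$ in any norm compatible with the statement of Proposition \ref{prop:2term-tr}. Indeed $\widehat{\nabla^2U_t}(k)=-\hat{V}(k)\,(k\otimes k)\,\hat{\widetilde{\rho}}_t(k)$, so even for Coulomb one needs $\hat{\widetilde{\rho}}_t\in L^1$, i.e.\ extra Sobolev regularity of $\widetilde{\rho}_t$, and for a general $V$ in the admissible class one has no usable pointwise bound on $\hat{V}$ at all. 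The paper's substitute for your Fourier step is the generalized Fefferman--de la Llave formula \eqref{eq:GFDLL} (Proposition \ref{prop:FDLL}): $V$ is decomposed into Gaussians $\chi_{(r,z)}$ with weight $g(r)$, derivatives of $U_t$ become $\nabla^2\widetilde{\rho}_t$ integrated against Gaussians, and assumptions b) and c) are exactly the integrability of $r^3g(r)$ at infinity and near $r=0$ (after splitting the $r$-integral as in \eqref{eq:r-split}) needed to close the bounds with constants depending only on $\|\widetilde{\rho}_t\|_{L^1}$ and $\|\nabla^2\widetilde{\rho}_t\|_{L^\infty}$. Note finally that the reduction to Hilbert--Schmidt norms also makes the evaluation of $\nabla^2U_t$ at interior points of the segment $[y,x]$ harmless --- a kernel is estimated pointwise there --- so the conjugation device you introduce to restore an operator structure is not needed in the first place.
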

The proof can be found in \cite{S18} and it is based on the following procedure: we write the identity operator 
as
\begin{equation*}
\mathbf{1}=(1-\hbar^2\Delta)^{-1} (1+x^2)^{-1}(1+x^2) (1-\hbar^2 \Delta).
\end{equation*} 
By Cauchy-Schwarz inequality we have
\be\label{eq:tr-B} 
\tr \, |\mathcal{B}_t| \leq \| (1-\hbar^2\Delta)^{-1} (1+x^2)^{-1} \|_{\rm HS} \, \| (1+x^2) (1-\hbar^2 \Delta) \mathcal{B}_t \|_{\rm HS}
\ee
We notice that for some $C>0$ the following bound holds
\begin{equation*}
\| (1-\hbar^2\Delta)^{-1} (1+x^2)^{-1} \|_{\rm HS} \leq C \sqrt{N} 
\end{equation*}
where we have used the explicit form of the kernel of the operator $(1-\hbar^2\Delta)^{-1}$ and the fact that $\hbar^3=N$.\\
We denote  by $U_s$ the convolution of the interaction with the spatial density at time $s$
\be\label{eq:convolution}
U_s:= V* \widetilde{\varrho}_s.
\ee
We introduce the notation $$\widetilde{\mathcal{B}} := (1-\hbar^2 \Delta) \mathcal{B}_s$$ and observe that the kernel of $\widetilde{\mathcal{B}}$ reads 
\be\label{eq:termsB}
\widetilde{\mathcal{B}} (x;x'):=\sum_{j=1}^7 \widetilde{\mathcal{B}}_j (x;x')  
\ee
where
{\small
\begin{align*}
\widetilde{\mathcal{B}}_1 &(x;x')
 = N \left[U_t(x) - U_t(x') - \nabla U_t\left(\dfrac{x+x'}{2}\right)\cdot(x-x')\right]\int \widetilde{W}_{N,t}\left(\dfrac{x+x'}{2},v\right)e^{i\,v\cdot\frac{(x-x')}{\hbar}}dv \displaybreak[0]  \\
\widetilde{\mathcal{B}}_2\ &(x;x')\\
 =-&N\hbar^2\left[\Delta U_t(x) - \dfrac{1}{4}\Delta \nabla U_t\left(\dfrac{x+x'}{2}\right)\cdot(x-x') - \dfrac{1}{2} \Delta U_t \left(\dfrac{x+x'}{2}\right)\right] \int \widetilde{W}_{N,t}\left(\dfrac{x+x'}{2},v\right)e^{i\,v\cdot\frac{(x-x')}{\e}}dv \displaybreak[0]\\
\widetilde{\mathcal{B}}_3\ &(x;x')\\
 =&- \frac{N\hbar^2}{4} \left[U_t(x) - U_t(x') - \nabla U_t\left(\dfrac{x+x'}{2}\right)\cdot(x-x')\right] \int (\Delta_1 \widetilde{W}_{N,t}) \left(\dfrac{x+x'}{2},v\right)e^{i\,v\cdot\frac{(x-x')}{\hbar}}dv \displaybreak[0]\\
\widetilde{\mathcal{B}}_4 &(x;x')
 = N \left[U_t(x) - U_t(x') - \nabla U_t\left(\dfrac{x+x'}{2}\right)\cdot(x-x')\right] \int \widetilde{W}_{N,t}\left(\dfrac{x+x'}{2},v\right) v^2 e^{i\,v\cdot \frac{(x-x')}{\hbar}}dv \displaybreak[0]\\
\widetilde{\mathcal{B}}_5\ &(x;x')\\
 =- &\dfrac{N\hbar^2}{2} \left[\nabla U_t(x) - \dfrac{1}{2} \nabla^2 U_t\left(\dfrac{x+x'}{2}\right) (x-x') - \nabla U_t\left(\dfrac{x+x'}{2}\right)\right] \int (\nabla_1 \widetilde{W}_{N,t}) \left(\dfrac{x+x'}{2},v\right) e^{i\,v\cdot \frac{(x-x')}{\hbar}}dv \displaybreak[0]\\
\widetilde{\mathcal{B}}_6 &(x;x')\\
 = &- N\hbar \left[\nabla U_t(x) - \dfrac{1}{2} \nabla^2 U_t\left(\dfrac{x+x'}{2}\right)(x-x') - \nabla U_t\left(\dfrac{x+x'}{2}\right)\right] \int \widetilde{W}_{N,t}\left(\dfrac{x+x'}{2},v\right) v e^{i\,v\cdot \frac{(x-x')}{\hbar}}dv \displaybreak[0]\\
\widetilde{\mathcal{B}}_7 &(x;x')
 = - N\hbar \left[U_t(x) - U_t(x') - \nabla U_t\left(\dfrac{x+x'}{2}\right)\cdot(x-x') \right] \int (v\cdot \nabla_1 \widetilde{W}_{N,t}) \left(\dfrac{x+x'}{2},v\right) e^{i\,v\cdot \frac{(x-x')}{\hbar}}dv
\end{align*}}
where we used the notation $\nabla_1$ and $\Delta_1$ to indicate derivatives with respect to the first variable.

In order to gain extra powers of $\hbar$, we write  
\begin{equation*}
\begin{split} 
&U_t (x) - U_t (x') - \nabla U_t \left( \frac{x+x'}{2} \right) \cdot (x-x') \\ 
&= \int_0^1 d\lambda \left[ \nabla U_t \left(\lambda x + (1-\lambda) x'\right) - \nabla U_t \left(\dfrac{(x+x')}{2}\right) \right] \cdot (x-x')  \\ 
&= \sum_{i,j=1}^3\int_0^1 d\lambda\left(\lambda-\frac{1}{2}\right)\int_0^1 d\mu\,\partial_i \partial_j U_t\left(h(\lambda, \mu, x,x')\right) (x-x')_i (x-x')_j\,,
\end{split} 
\end{equation*}
where $h(\lambda, \mu, x,x'):=\mu (\lambda x + (1-\lambda) x') + (1-\mu)\dfrac{(x+x')}{2}$.\\
We notice that $U_t$ defined in \eqref{eq:convolution} has a convolution structure. Therefore   derivatives of $U_t$ are equivalent to derivatives of the spatial density $\widetilde{\varrho}_t$. Hence, Fefferman - de la Llave representation formula \eqref{eq:GFDLL} leads to
\be\label{eq:U-taylor}
\begin{split}
U_s &(x) - U_s (x') - \nabla U_s \left( \dfrac{x+x'}{2} \right) \cdot (x-x') \\
&= \sum_{i,j=1}^3\int_0^1 d\lambda\,\left(\lambda-\dfrac{1}{2}\right) \int_0^1 d\mu\,\\
&\quad\int_0^\infty r^{3}g(r) \int dy\,\chi_{(r,y)}\left(h(\lambda, \mu,x,x')\right)\partial_i\partial_j\widetilde{\varrho_s}(y)\, (x-x')_i (x-x')_j\,.
\end{split}
\ee

Plugging \eqref{eq:U-taylor} into the definition of $\widetilde{\mathcal{B}}_1$ and using twice the identity
\be\label{eq:v-der}
(x-x')\int\widetilde{W}_{N,t}\left(\frac{x+x'}{2},v\right)\,e^{iv\cdot\frac{x-x'}{\hbar}}\,dv=-i\hbar\int \nabla_v\widetilde{W}_{N,t}\left(\frac{x+x'}{2},v\right)\,dv 
\ee
and Young's inequality, we get
{\small
\begin{equation*}
\begin{split}
|\widetilde{\mathcal{B}}_1&(x;x')|\\
\leq & C\,N\,\hbar^2\sum_{i,j=1}^3\int_0^1d\lambda\,\left|\lambda-\frac{1}{2}\right|\int_0^1 d\mu\left| \int_0^\infty r^{3}g(r) \right.\\
\quad&\left.\int dy\,\chi_{(r,y)}(h(\lambda,\mu,x,x'))\partial^2_{i,j}\widetilde{\varrho}_s(y)\int dv\,\partial^2_{v_i,v_j}\widetilde{W}_{N,s}\left(\frac{x+x'}{2},v\right)e^{iv\cdot\frac{(x-x')}{\hbar}} \right|
\end{split}
\end{equation*}}

Therefore, the Hilbert-Schmidt norm of the operator $(1+x^2)\widetilde{\mathcal{B}}_1$, where $(1+x^2)$ is the multiplication operator, can be estimated as follows:
\begin{equation*}
\begin{split}
\|(1+x^2)&\widetilde{\mathcal{B}}_1\|_{\rm HS}^2\\
=& N\hbar^4\int dq\int dp'\left[1+q^2+\hbar^2 p^2\right]^2\left|\sum_{i,j=1}^3\int_0^1d\lambda\,\left(\lambda-\frac{1}{2}\right)\int_0^1 d\mu \int_0^\infty r^{3}g(r)\right.\\
\quad&\left.\int dy\,\chi_{(r,y)}(q+\hbar\mu(\lambda-1/2)p)\partial^2_{v_i,v_j}\widetilde{\varrho}_s(y)\int dv\,\partial^2_{v_i,v_j}\widetilde{W}_{N,s}\left(q,v\right)e^{iv\cdot p} \right|^2
\end{split}
\end{equation*}
where we performed the change of variables 
\be\label{eq:change-var}
q=\dfrac{x+x'}{2}\,,\quad\quad\quad p=\dfrac{x-x'}{\hbar}
\ee
with Jacobian $J=8\,\hbar^3=8\,N^{-1}$.\\
%We introduce the shorthand notation
%%
%\be\label{eq:braket-not}
%\langle X \rangle_\e:=1+X^2+\e^2 (X')^2
%\ee
%
%\be
%
%\ee
%
We fix $k>0$ and divide the integral into the two sets  $\pazocal{A}_{<}:=\{ r\in\R_+\ |\ r\leq k\}$ and $\pazocal{A}_>:=\{r\in\R_+\ |\ r>k\}$, so that
\begin{equation}\label{eq:r-split}
\begin{split}
\|(1+&x^2)\widetilde{\mathcal{B}}_1\|_{\rm HS}^2\\
\leq& CN\hbar^4\int dq\int dp[1+q^2+\hbar^2 p^2]^2\sum_{i,j=1}^3\int_0^1d\lambda\,\left|\lambda-\frac{1}{2}\right|\int_0^1 d\mu\\
&\ \left|\int_{\pazocal{A}_<} r^{3}g(r) \int dy\,\chi_{(r,y)}(q+\hbar\mu(\lambda-1/2)p)\partial^2_{v_i,v_j}\widetilde{\varrho}_s(y)\int dv\,\partial^2_{v_i,v_j}\widetilde{W}_{N,s}\left(q,v\right)e^{iv\cdot p} \right|^2\\
+& CN\hbar^4\int dq\int dp[1+q^2+\hbar^2 p^2]^2\sum_{i,j=1}^3\int_0^1d\lambda\,\left|\lambda-\frac{1}{2}\right|\int_0^1 d\mu \\
&\ \left|\int_{\pazocal{A}_>} r^{3}g(r) \int dy\,\chi_{(r,y)}(q+\hbar\mu(\lambda-1/2)p)\partial^2_{v_i,v_j}\widetilde{\varrho}_s(y)\int dv\,\partial^2_{v_i,v_j}\widetilde{W}_{N,s}\left(q,v\right)e^{iv\cdot p} \right|^2
\end{split}
\end{equation}
Denote by $\mathfrak{A}_<$ and $\mathfrak{A}_>$ the first and the second term of the sum on the r.h.s. of \eqref{eq:r-split} respectively. 
 For $\mathfrak{A}_<$ we use Young inequality and H\"{o}lder inequality with conjugated exponents $\theta=1$ and $\theta'=\infty$, then we perform the integral in the $y$ variable to extract $r^3$ which cancels the singularity at zero in the expression for $g(r)$ (Cf. \eqref{eq:g(r)}) together with assumption c) on $V$, thus leading to the bound 
\be\label{eq:HS-B1-r0}
\mathfrak{A}_<\leq CN\hbar^4\int dq\int dv(1+q^2)^2|\nabla^2_v\widetilde{W}_{N,s}(q,v)|^2+CN\hbar^8\int dq\int dv\,|\nabla^4_v\widetilde{W}_{N,s}(q,v)|^2
\ee
where $C$ depends on $\|\nabla^2\widetilde{\varrho}_s\|_{L^\infty}$.\\
For $\mathfrak{A}_>$, we integrate by parts twice in the $y$ variable and recall that  $e^{-|z-y|^2/r^2}(1+|z-y|^2/r^2)$ is bounded uniformly in $z\in\R^3$. Since $\widetilde{\varrho}_s\in L^1(\R^3)$ we get the bound
\be\label{eq:HS-B1-rinf}
\mathfrak{A}_>\leq CN\hbar^4\int dq\int dv(1+q^2)^2|\nabla^2_v\widetilde{W}_{N,s}(q,v)|^2+CN\hbar^8\int dq\int dv\,|\nabla^4_v\widetilde{W}_{N,s}(q,v)|^2
\ee
where $C$ depends on $\|\widetilde{\varrho}_s\|_{L^1}$ and we have used the expression \eqref{eq:g(r)} for $g(r)$ and assumption b) for $V$.\\
Whence, the estimates \eqref{eq:HS-B1-r0} and \eqref{eq:HS-B1-rinf} lead to
\be\label{eq:bound-HS-tildeB1}
\|(1+x^2)\widetilde{\mathcal{B}}_1\|_{\rm HS}
\leq C\sqrt{N}\hbar^2\|\widetilde{W}_{N,s}\|_{H_2^2}
+C\sqrt{N}\hbar^4\|\widetilde{W}_{N,s}\|_{H^4}
\ee
where $C=C(\|\widetilde{\varrho}_s\|_{L^1},\|\nabla^2\widetilde{\varrho}_s\|_{L^\infty})$.\\
The Hilbert-Schmidt norms $\|(1+x^2)\widetilde{\mathcal{B}}_3\|_{\rm HS}$, $\|(1+x^2)\widetilde{\mathcal{B}}_4\|_{\rm HS}$ and $\|(1+x^2)\widetilde{\mathcal{B}}_7\|_{\rm HS}$ can be handled analogously, thus obtaining 
\be
\|(1+x^2)\widetilde{\mathcal{B}}_3\|_{\rm HS}\leq C\sqrt{N}\hbar^4\|\widetilde{W}_{N,s}\|_{H_4^4}
+C\sqrt{N}\hbar^6\|\widetilde{W}_{N,s}\|_{H_4^6}
\ee
\be
\|(1+x^2)\widetilde{\mathcal{B}}_4\|_{\rm HS}\leq C\sqrt{N}\hbar^2\|\widetilde{W}_{N,s}\|_{H_4^2}
+C\sqrt{N}\hbar^4\|\widetilde{W}_{N,s}\|_{H_4^4}
\ee
\be
\|(1+x^2)\widetilde{\mathcal{B}}_7\|_{\rm HS}\leq C\sqrt{N}\hbar^3\|\widetilde{W}_{N,s}\|_{H_3^2}
+C\sqrt{N}\hbar^5\|\widetilde{W}_{N,s}\|_{H_2^5}
\ee
To bound the  $\widetilde{\mathcal{B}}_i$ terms, $i=2,5,6$, in which higher order derivatives of $U$ appear, we proceed as for $\widetilde{\mathcal{B}}_1$ and obtain the following bounds:
\be\label{eq:bound-HS-tildeB2}
\|(1+x^2)\widetilde{\mathcal{B}}_2\|_{\rm HS}
\leq C\sqrt{N}\hbar^4\|\widetilde{W}_{N,s}\|_{H_2^2}
+C\sqrt{N}\hbar^6\|\widetilde{W}_{N,s}\|_{H^4}
\ee
\be\label{eq:bound-HS-tildeB5}
\|(1+x^2)\widetilde{\mathcal{B}}_5\|_{\rm HS}
\leq C\sqrt{N}\hbar^4\|\widetilde{W}_{N,s}\|_{H_2^4}
+C\sqrt{N}\hbar^6\|\widetilde{W}_{N,s}\|_{H^6}
\ee
\be\label{eq:bound-HS-tildeB6}
\|(1+x^2)\widetilde{\mathcal{B}}_6\|_{\rm HS}
\leq C\sqrt{N}\hbar^3\|\widetilde{W}_{N,s}\|_{H_2^2}
+C\sqrt{N}\hbar^5\|\widetilde{W}_{N,s}\|_{H^4}
\ee
where $C=C(\|\widetilde{\varrho}_s\|_{L^1},\|\nabla^2\widetilde{\varrho}_s\|_{L^\infty})$.
We refer to \cite{S-VP} for a detailed proof.

Gathering together all the terms, we get
\be
\begin{split} 
\| (1 + x^{2}) &\widetilde{\mathcal{B}} \|_\text{HS}\\
 \leq C \sqrt{N}& \left[ \hbar^2 \| \widetilde{W}_{N,s} \|_{H^2_{4}} + \hbar^3  \| \widetilde{W}_{N,s} \|_{H^3_{4}} + \hbar^4 \| \widetilde{W}_{N,s} \|_{H^4_{4}} + \hbar^5 \| \widetilde{W}_{N,s} \|_{H^5_4}+\hbar^6 \| \widetilde{W}_{N,s} \|_{H^6_4} \right] 
\end{split}
\ee
%

%%%%%%%%%%%%%%%%%%%%%%%%%%%%%%%%%%%%%%%%%%%%%
%%%%%%%%%%%%%%%%%%%%%%%%%%%%%%%%%%%%%%%%%%%%%

\section{Proof of Theorem \ref{thm:main}}
\label{sect:proof}

From Lemma \ref{lem:est-tr} we have
\be\label{eq:gronwall}
\tr\ |\omega_{N,t}-\widetilde{\omega}_{N,t}|\leq \dfrac{1}{\hbar}\int_0^t  \tr\ |[V*(\rho_s-\widetilde{\rho}_s),\widetilde{\omega}_{N,s}]|\,ds+\dfrac{1}{\hbar}\int_0^t\tr\ |B_s|\,ds\,.\ee
From Proposition \ref{prop:1term-tr} and Proposition \ref{prop:2term-tr}, we get the following estimate on the terms on the r.h.s. of Eq. \eqref{eq:gronwall}:
\be\label{eq:gronwall2}
\tr\ |\omega_{N,t}-\widetilde{\omega}_{N,t}|\leq C\int_0^t\tr\ |\omega_{N,s}-\widetilde{\omega}_{N,s}|\,ds + CN\hbar\int_0^t\left(1+\sum_{k=1}^4 \|\widetilde{W}_{N,s}\|_{H_4^{k+2}} \right)\,ds\,.
\ee
We further observe that the quantity $\|\widetilde{W}_{N,s}\|_{H_4^{k+2}}$ is bounded uniformly in $N$ if $\|W_N\|_{H_4^{k+2}}\leq C$ for $k=1,\dots,4$. These regularity estimates can be obtained by simply adapting the proof of Lions and Perthame (cf. \cite{LP}). The details can be found in \cite{S19}.\\
Whence, by Gr\"{o}nwall lemma we have
\[
\tr\ |\omega_{N,t}-\widetilde{\omega}_{N,t}|\leq CN\hbar\left(1+\sum_{k=1}^4 \|{W}_{N}\|_{H_4^{k+2}} \right)\,,
\]
where $C=C(t,\|\widetilde{\rho}\|_{L^1},\|\nabla^2\widetilde{\rho}_s\|_{L^\infty})$. The boundedness of $\|\nabla^2\widetilde{\rho}_s\|_{L^\infty}$ follows from an adaptation of \cite{LP} (cf. \cite{S19} for the details of the proof).

%%%%%%%%%%%%%%%%%%%%%%%%%%%%%
%%%%%%%%%%%%%%%%
%\section{Conclusive remarks}\label{sect:conclusions}
%
%current state of art: what it is left to do.
%
%1. general mixed states with singular interaction potential
%
%2. pure states
%
%3. general pure states in the many-body picture
%
%4. mixed states with Coulomb interaction

%%%%%%%%%%%%%%%%%%%%%%%%%%%%%%%%%%%%%%%%%%%%%

%\section*{Appendix}

%%%%%%%%%%%%%%%%%%%%%%%%%%%%%%%%%%%%%%%%%%%%%
%%%%%%%%%%%%%%%%%%%%%%%%%%%%%%%%%%%%%%%%%%%%

%
\begin{acknowledgement}
The author acknowledges the support of the Swiss National Science Foundation through the Eccellenza project PCEFP2\_181153 and of the NCCR SwissMAP.
\end{acknowledgement}

\end{document}